\documentclass[pdflatex,iicol]{sn-jnl}
\usepackage[numbers,sort&compress]{natbib}
\usepackage[T1]{fontenc}
\usepackage{graphicx}
\usepackage{xcolor,manyfoot}
\DeclareNewFootnote{A}[gobble]
\usepackage{amsmath,amssymb,bm}
\usepackage{amsthm}
\theoremstyle{thmstyleone}
\newtheorem{theorem}{Theorem}
\hypersetup{hidelinks,pdfauthor={Hongyu Ma},
pdftitle={Information-preserving boundary reconstruction for center-flux energetics in a regulated SU(3) gauge ladder}}

\makeatletter
\newcommand{\displayappendixsection}[1]{%
  \let\appendixsavedhangfrom\@hangfrom
  \def\@hangfrom##1{\noindent##1\par\nobreak\noindent}%
  \section{#1}%
  \let\@hangfrom\appendixsavedhangfrom
}
\makeatother

\begin{document}
\title{Information-preserving boundary reconstruction for center-flux energetics
in a regulated SU(3) gauge ladder}
\author*{\fnm{Hongyu} \sur{Ma}}\email{hongyu.ma@alu.scu.edu.cn}
\affil{Independent Researcher, Changzhou, China}
\abstract{Gauss constraints make it nontrivial to open, close or join a gauge
system while retaining its interior physics. Boundary reconstructions are
constructed on a pure SU(3) ladder at six site-singlet cutoffs, with all allowed
intertwiner multiplicities. Replacements of prescribed length are obtained using
legal column paths and physical self-loops.
On their stated physical domains, the channels preserve unnormalized conditional
interior matrices, including their weights and within-branch
coherence. Lowest-cutoff dephasing controls show that path
probabilities alone do not preserve the matched magnetic
observables. In that example, coherent resources lower the output energy
without purification or global ground-state preparation.
Matched interior Hamiltonian terms cancel exactly, so the energy cost is bounded
independently of longitudinal length.
The vacuum and flux energy densities are then established separately by
same-sector gluing and stability. The open and periodic vacuum-subtracted
line-energy densities are shown to be equal by two same-length boundary
comparisons using the qualified periodic minimizing domain.
These limits hold at fixed lattice spacing, transverse width,
coupling and chosen cutoff.}
\maketitle

\section{Introduction}
\label{sec:introduction}

Electric-flux sectors describe how a pure non-Abelian gauge field
stores energy without dynamical fundamental
matter~\cite{tHooft1979,Yaffe1980ConfinementSUN}.
Their energetics has long been studied in Hamiltonian SU(3) lattice
gauge theory~\cite{KogutPearsonShigemitsu1981,KogutPearsonShigemitsu1979Beta}.
For a regulated ladder, two questions are closely related:
does the vacuum-subtracted flux energy have a longitudinal density,
and does that density depend on whether the ladder is open or
periodic? Finite spectra alone cannot settle either question.
A controlled comparison of physical states at different lengths and boundaries
is required to settle these questions.

The center-charge setting is also studied through one-form
symmetry~\cite{GaiottoEtAl2015GeneralizedSymmetries,HsinLamSeiberg2019OneForm,%
AddII04} and disorder or flux-sector observables~\cite{AddVII03,AddII20}.
Strong-coupling and Hamiltonian studies provide the historical
setting~\cite{KogutSinclairSusskind1976LowEnergyQCD,AddVI28}.
Confinement benchmarks include glueball and string-tension
calculations~\cite{AddVI09,AddVI16,AddVI36}.
Finite-volume context comes from electric-flux and twisted-sector
studies~\cite{AddII09,KollerVanBaal1986,Luscher1983,GarciaPerezEtAl2018,%
AddII21,VanBaal1982Hypertorus}
and related torus spectra and volume
dependence~\cite{AddII11,AddII22,AddII05,AddII15,AddVI18}.

Gauss constraints obstruct ordinary tensor-product cutting and
joining~\cite{DelcampDittrichRiello2016NonAbelianEntanglement,%
SoniTrivedi2016GaugeEntanglement,Donnelly2014NonAbelianEntanglement,%
AokiEtAl2015GaugeEntanglementDefinition,%
BuividovichPolikarpov2008GaugeEntanglement,LinRadicevic2020Entanglement}. The
representation data left by a removed piece must be matched by its replacement,
as also occurs in gauge-theory
subregions~\cite{CasiniHuertaRosabal2014,Donnelly2012BoundaryRepresentations,%
HateganMarandiuc2024NonAbelianEntanglement,DonnellyFreidel2016LocalSubsystems}.
The replacement must also have the right length: an open ladder
of $N$ plaquettes has $N+1$ complete columns, while a periodic one
has $N$. Both requirements are satisfied using legal column paths and physical
self-loops. The construction is then extended from basis paths to quantum states
by conditioning on the exposed rail data.
Related boundary descriptions use flux and gluing
variables~\cite{GomesRiello2021QuasilocalYM,%
RielloSchiavina2024FluxSuperselection,Riello2021YMFluxSuperselection,%
GomesHopfmullerRiello2019BoundaryCharges}, or superselection sectors and edge
modes~\cite{BallCiambelli2026DynamicalEdgeModes,%
FeldmanEtAl2024SuperselectionEntanglement,VanAcoleyenEtAl2016DistillationGauge,%
GeillerJaiakson2020EdgeModes}.

The resulting channel preserves unnormalized conditional interior
matrices, including their original weights and within-branch
coherence. In the lowest-cutoff example, dephasing controls show
why path probabilities alone do not suffice: interior magnetic
observables depend on coherence that diagonal electric data
cannot detect. The boundary state can still be chosen while this interior
information is preserved. Coherent legal resources can lower the output energy
without purifying the output state or preparing a global ground state.

The same information preservation is used in the energy comparison.
Matched interior Hamiltonian terms cancel exactly, so only a bounded collection
of boundary terms remains. Same-sector gluing is allowed by this
length-independent cost, and each sector density is then established using
stability and a tail subadditivity argument.
The sector densities for open and periodic realizations are then shown to be
equal by two independent maps at the same length. Vacuum subtraction
is taken only after the sector limits exist.
These results are stated in the two theorems below. Resource optimization and
finite spectra are not premises of either theorem.

The construction uses six site-singlet cutoffs with all retained
intertwiner multiplicities. The longitudinal limit is taken with the lattice
spacing, transverse width, coupling and chosen cutoff held fixed.
Section~\ref{sec:longitudinal} states the scope of
the resulting center-flux line-energy density.

\section{The SU(3) ladder and its physical path spaces}
\label{sec:physical-paths}

A pure SU(3) plaquette ladder without dynamical fundamental matter is
considered.
The global form of the gauge group specifies the allowed line
operators~\cite{AharonySeibergTachikawa2013}. The lattice spacing $a$, finite
transverse extent, coupling
convention and representation cutoff are fixed as the longitudinal
length $L=Na$ varies. Numerical energies use $a=1$ and
\begingroup
\setlength{\abovedisplayskip}{6pt plus 2pt minus 1pt}
\setlength{\belowdisplayskip}{6pt plus 2pt minus 1pt}
\begin{equation}
\label{eq:numerical-hamiltonian}
 \begin{aligned}
 H_{\rm num}&=\frac{g^2}{2}\sum_\ell C_2(r_\ell)
 -\frac1{g^2}\sum_{j=1}^N(\Box_j+\Box_j^\dagger),\\
 &\qquad g^2>0.
\end{aligned}
\end{equation}
\endgroup
Here $N$ counts plaquettes and $r_\ell$ is the irrep on a physical
link. The electric term is diagonal~\cite{KogutSusskind1975}.
The oriented plaquette contracts the four local singlet tensors with
surrounding rail data fixed; the same contraction and its adjoint
enter once at every length. A common state-independent magnetic
constant has been removed.

Gauss's law is imposed in the basis~\cite{CiavarellaKlcoSavage2021}.
Related Hamiltonian
bases~\cite{BurgioEtAl2000PhysicalHilbert,AnishettyMathurRaychowdhury2010,%
LigterinkWaletBishop2000ManyBodySUN,BronzanVaughan1991HamiltonianQCDBasis} and
treatments of Gauss
constraints~\cite{MathewRaychowdhury2025ProtectingSU3,%
PardoEtAl2023GaussFermionElimination} provide
alternative constructions.
At a trivalent vertex, the label $\gamma$ distinguishes orthonormal
singlets in $R_t\otimes\overline{R_b}\otimes
R_v$~\cite{deSwart1963SU3CG,Kaeding1995SU3Isoscalar,AddIII22,%
Hecht1965SU3Recoupling}. Every such singlet
is retained when
\begin{equation}
\label{eq:site-cutoff}
 \begin{aligned}
 & C_2(R_t)+C_2(R_b)+C_2(R_v)\leq B,\\
 & B\in\left\{4,\frac{17}{3},6,\frac{20}{3},\frac{23}{3},9\right\}.
\end{aligned}
\end{equation}
The cutoff is on the incident sum, not on links
independently~\cite{BalajiEtAl2026}.
In particular, the two $8\otimes8\otimes8$ singlets at $B=9$
remain distinct.
SU(3) coupling and singlet constructions retain multiplicity
information~\cite{AddIII16,AddIII15,AnishettySreeraj2019Singlets}.
Outer multiplicities also enter trivalent-vertex
bases~\cite{KadamEtAl2025} and the general coupling
problem~\cite{PanDraayer1998OuterMultiplicityI,AddIII04,AddIII17,%
PanDraayer1998OuterMultiplicityII}.

Let $(U_i,D_i)$ be the upper and lower rail representations at a
cut and $V_i$ the vertical representation in column $i$.
The upper and lower singlets belong to
\begin{equation}
\label{eq:column-singlets}
 U_i\otimes\overline{U_{i-1}}\otimes V_i,
 \qquad
 D_{i-1}\otimes\overline{D_i}\otimes\overline{V_i}.
\end{equation}
A complete column is therefore the directed change
$(U_{i-1},D_{i-1})\to(U_i,D_i)$ together with
$(V_i,\gamma_t,\gamma_b)$. Columns with the same endpoints and
different witnesses are different basis states.
They form a finite directed multigraph $\mathcal G_{B,q}$ at
conserved center charge
\begin{equation}
\label{eq:center-charge}
 \begin{aligned}
 & q=t(D)-t(U)\pmod3,\\
 & t(R)=p+2q_{\rm D}\pmod3,
\end{aligned}
\end{equation}
where $(p,q_{\rm D})$ are Dynkin labels. The vacuum is $q=0$.
Fundamental Wilson winding along $-z$ selects the flux convention
$q=1$; opposite winding gives $q=2$.

For the canonical rail pairs
\begin{equation}
\label{eq:canonical-pairs}
 c_0=(1,1),\qquad c_1=(\bar3,1),\qquad c_2=(3,1),
\end{equation}
the open physical space $\mathcal H_{N,B}^{(q,\mathrm o)}$ is
spanned by complete $(N+1)$-edge paths from $c_q$ to $c_q$.
The periodic space $\mathcal H_{N,B}^{(q,\mathrm p)}$ is spanned
by complete labelled $N$-edge closed walks. These are orthonormal
path bases with all multiplicities; periodic positions are not
identified under translations. Both describe $N$ plaquettes.
The graph records legal concatenation, not Hamiltonian transitions.
The full periodic space includes every active component, a
distinction used in Sec.~\ref{subsec:general-domains}.

The difference of one column is central to changing boundaries at
fixed length. After an open path is cut, the replacement must match
its exposed representations and the required periodic column
count. Reachability of the endpoints alone does not ensure the latter.

Alternative formulations use loop or Schwinger-boson
variables~\cite{AddI03,AddIII20,AddIII21,AddIII08,AddIII28}, or holonomies,
duality and gauge fixing~\cite{AddIV11,AddIII02,AddIII19,AddIV14,AddIII18}.
Other choices include monomer-dimer and q-deformed
constructions~\cite{AddI01,HayataHidaka2023QDeformed,AddIII25}, and improved or
resource-oriented Hamiltonian
formulations~\cite{AddIV07,AddIII13,ZoharBurrello2015}.
Further examples use spin networks and q
deformation~\cite{AddIII06,AddIV09,AddVIII02}, improved honeycomb
Hamiltonians~\cite{AddIII14}, or quantum links and discrete gauge
groups~\cite{BrowerChandrasekharanWiese1999,AddIV08,AddII06}.

Truncation strategies and their errors are studied in
Refs.~\cite{AddI04,AddI05,LiuChandrasekharan2022QubitRegularization,AddIV02}.
Digitization and Hamiltonian
improvement~\cite{AddIV16,AddIV15,CarenaEtAl2022ImprovedHamiltonians}
offer further comparisons, with simulation bounds given in
Ref.~\cite{TongEtAl2022ProvablyAccurate}.

\section{A minimal boundary reconstruction}
\label{sec:minimal-reconstruction}

\subsection{A legal bridge at the same physical length}
\label{subsec:b4-exact-bridge}

At $B=4$ and $q=1$, the active rail pairs are
\begin{equation}
\label{eq:b4-vertices}
 x=(1,3),\qquad y=(3,\bar3),\qquad c_1=(\bar3,1).
\end{equation}
Every ordered pair has one complete column, including equal
endpoints, with $\gamma_t=\gamma_b=0$.
Suppose a retained segment from $y$ to $x$ needs a five-column
closing bridge. A direct edge $x\to y$ is too short.
Instead use $e_{xc}:x\to c_1$ and $e_{cy}:c_1\to y$, both with
$V=3$, and the physical self-loop $e_c:c_1\to c_1$ with $V=1$.
The loop contains the singlets
$\bar3\otimes3\otimes1$ and $1\otimes1\otimes1$, within the
site cutoff. Repeating it gives
\begin{equation}
\label{eq:five-edge-bridge}
 w_{xy}=e_{xc}\,e_c\,e_c\,e_c\,e_{cy}.
\end{equation}
Each traversal is a different column position.
The loop therefore supplies length without changing the exposed
rails or charge. Figure~\ref{fig:exact-bridge} shows how this
fixed bridge closes a retained edge in the $N=6$ example.
Its choice is geometric, not an energy minimization.

\begin{figure*}[tbp]
\centering
\includegraphics[width=0.5333333333\textwidth]{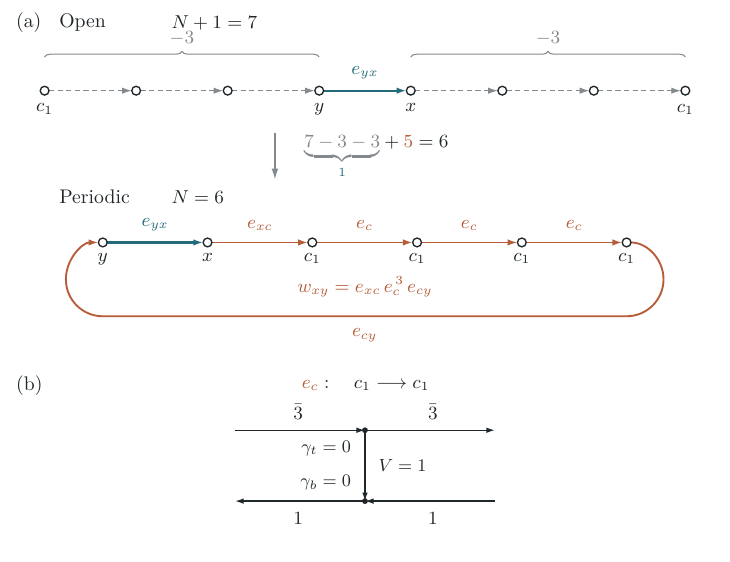}
\caption{Exact-length boundary reconstruction at $B=4$ and $q=1$.
(a) Removing three columns (gray dashed) at each end of an open
seven-column $N=6$ basis path retains $e_{yx}$ (teal).
The fixed five-column bridge $w_{xy}$ (rust) closes it into a
six-column periodic walk, preserving the six plaquettes and physical
length $6a$. Circles denote the rail pairs of
Eq.~\eqref{eq:b4-vertices}. Each directed edge denotes one complete
column, including the curved return $e_{cy}$. Repeated $c_1$ labels
mark successive visits to the same rail-pair state.
(b) The physical column for $e_c:c_1\to c_1$, with $V=1$ and
$\gamma_t=\gamma_b=0$. Arrows in (a) specify legal column concatenation.
The quantum-input example in Sec.~\ref{subsec:conditional-inputs}
uses $N=8$}

\label{fig:exact-bridge}
\end{figure*}

\subsection{What survives for a quantum input}
\label{subsec:conditional-inputs}

A basis-path replacement does not yet specify the operation on a superposition
or mixed state, which can correlate the retained paths with the removed pieces.
The quantum operation is illustrated at $B=4$, $q=1$, $N=8$, $g^2=a=1$ and zero
source.

Let $b=(s_L,s_R)$ be the exposed rail pairs and $P_b$ the
corresponding path projector. A complete input path splits uniquely
into a retained path in $\mathcal K_b$ and removed legs in
$\mathcal L_b$. With their basis identification $U_b$,
\begin{equation}
\label{eq:general-input-space}
 \begin{aligned}
 & \mathcal H_{\rm in}^{\rm dom}\simeq
 \bigoplus_b(\mathcal K_b\otimes\mathcal L_b),\\
 & U_b:P_b\mathcal H_{\rm in}^{\rm dom}\to
       \mathcal K_b\otimes\mathcal L_b.
\end{aligned}
\end{equation}
Here the domain is the whole B4 open sector.
This coordinate decomposition imposes no product-state assumption.
The retained information is
\begin{equation}
\label{eq:conditional-block}
 \begin{aligned}
 & \sigma_b(\rho)=\operatorname{Tr}_{\mathcal L_b}
 (U_bP_b\rho P_bU_b^\dagger),\\
 & p_b=\operatorname{Tr}\sigma_b,\qquad \sum_b p_b=1.
\end{aligned}
\end{equation}
These matrices retain the original cut probabilities and the
coherence between compatible interior paths. They are not normalized
separately; correlations with the removed legs and coherence between
different $b$ are not retained.

Let $\mathcal J_b$ be the legal closing-path space and let $W_b$
concatenate a retained path and a closing path. Fix a normalized
basis bridge $|w_b\rangle$ through $c_1$ for every $b$, before
processing any input. The canonical reconstruction is
\begin{equation}
\label{eq:canonical-channel}
 \Phi(\rho)=\sum_b
 W_b[\sigma_b(\rho)\otimes|w_b\rangle\langle w_b|]W_b^\dagger.
\end{equation}
All branches contribute with their original weights.
The map is completely positive and trace preserving (CPTP), and
tracing the prepared bridge from each output block returns exactly
$\sigma_b$. Appendix~\ref{subsec:unified-kraus} proves these
properties for the full resource family.
Figure~\ref{fig:conditional-reconstruction}
shows the preserved conditional information and how the retained segment
joins the prepared bridge at fixed cuts.

\begin{figure*}[!tbp]
\centering
\includegraphics[width=0.5333333333\textwidth]{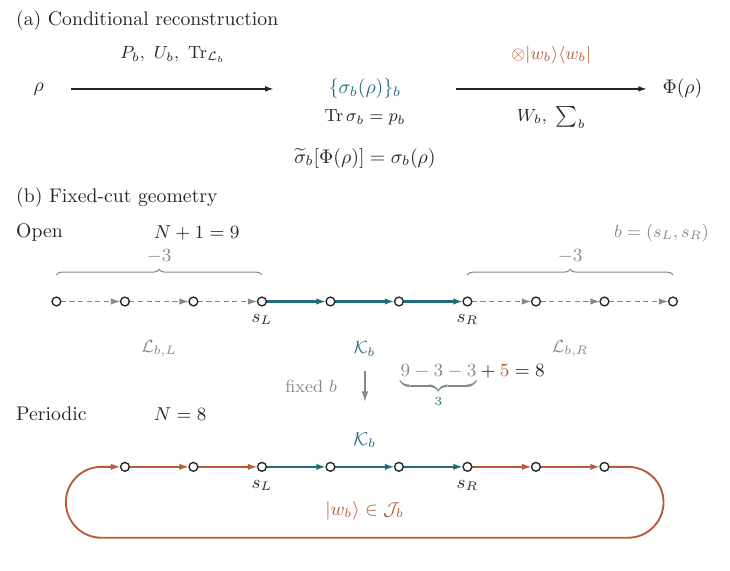}
\caption{Conditional reconstruction at $B=4$, $q=1$, $N=8$.
(a) The complete unnormalized interior matrices $\sigma_b$ retain
the original weights and within-branch coherence. A legal normalized
basis bridge is fixed before processing the input; summing all branches
gives the deterministic channel $\Phi$.
(b) For fixed cuts $b=(s_L,s_R)$, the gray legs $\mathcal L_{b,L/R}$
are removed. The retained segment $\mathcal K_b$ (teal) and the prepared
bridge in $\mathcal J_b$ (rust) form the periodic output at the same
physical length $Na$. Each directed edge, including the continuous
return, is one complete physical column}

\label{fig:conditional-reconstruction}
\end{figure*}

This block identity preserves physical interior measurements whose
active support changes neither the cuts nor the removed or prepared
pieces. Such an operator and its adjoint have the same retained
matrix in every compatible completion. Its expectation is therefore
the same pairing with $\sigma_b$ before and after reconstruction.
The support argument in Appendix~\ref{subsec:unified-identities}
extends this to the algebra generated by the matched operators,
without identifying it with all regional gauge-invariant observables.

\subsection{Coherence controls and the meaning of preservation}
\label{subsec:six-inputs}
\label{subsec:coherence-control}

Six fixed inputs distinguish where coherence is stored.
B is a basis path. S superposes paths with the same cut and removed
labels but different retained paths, so its interior cross term
survives. O instead differs in the removed labels, so tracing them
removes that cross term. X superposes different cut blocks.
M is an unequal mixture of S and X, and G is the numerical
zero-source open ground state. Further input conventions and numerical qualifications are provided in
Supplementary Information.

We compare complete-path dephasing $\mathcal D_{\rm path}$,
which deletes all off-diagonal path entries, with cut dephasing
$\mathcal D_\partial(\rho)=\sum_bP_b\rho P_b$. The three arms are
\begin{equation}
\label{eq:three-arms}
 \begin{aligned}
 \eta_A&=\Phi(\rho),\\
 \eta_B&=\Phi[\mathcal D_{\rm path}(\rho)],\\
 \eta_C&=\Phi[\mathcal D_\partial(\rho)].
\end{aligned}
\end{equation}
An arm label, the input code B and the cutoff $B$ are distinct.
For output extraction
$\widetilde\sigma_b(\eta)=
\operatorname{Tr}_{\mathcal J_b}(W_b^\dagger\eta W_b)$,
the conditional-block distance is
\begin{equation}
\label{eq:keep-distances}
 d_{\rm keep}(\rho,\eta)
 =\frac12\sum_b\|\widetilde\sigma_b(\eta)-\sigma_b(\rho)\|_1.
\end{equation}
Write $d_{\rm keep}^{\rm orig}(r)$ when the reference is the
original $\rho$ and $d_{\rm keep}^{\rm own}(r)$ when it is the
actual input $\rho_r$ to arm $r$. Extraction gives
$d_{\rm keep}^{\rm own}=0$ in every arm.
A and C also preserve the original blocks, since
$\Phi\mathcal D_\partial=\Phi$.
B preserves the blocks of its dephased input, which can differ
from the original ones. The distance tests the full conditional
matrices; it is not asserted to be optimal distinguishability in
the smaller physical observable algebra.

\begin{figure}[tbp]
\centering
\includegraphics[width=\linewidth]{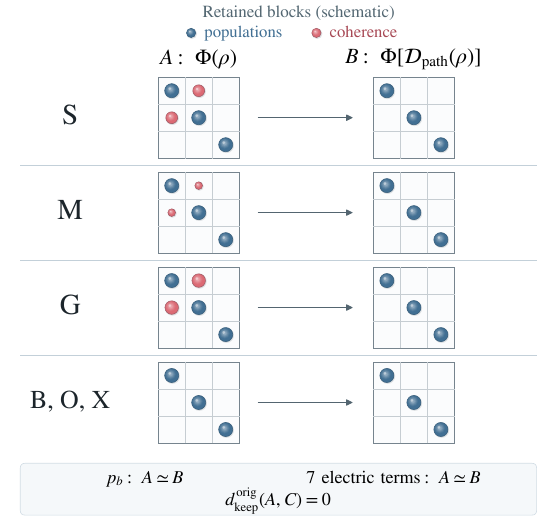}
\caption{Retained quantum information beyond path probabilities, for
the six inputs described in Sec.~\ref{subsec:six-inputs} at $B=4$,
$q=1$, $N=8$, $g^2=a=1$ and zero source. The four rows schematically
compare retained blocks in arms A and B of
Eq.~\eqref{eq:three-arms}; in arm B, complete-path dephasing acts
on the input before reconstruction. Blue markers indicate diagonal
entries and red markers within-block coherence. The grids and markers
do not encode matrix dimensions, amplitudes or phases.
B, O and X are distinct inputs grouped by their zero retained response;
O and X can have input coherence outside the retained blocks.
The corresponding distances and magnetic responses are given in
Table~\ref{tab:coherence-response}}

\label{fig:coherence-control}
\end{figure}

\begin{table*}[tb]
\centering
\caption{Original-input retained-block distances and signed magnetic
responses for the calculation in Fig.~\ref{fig:coherence-control}.
In $d_{\rm keep}^{\rm orig}(B)$, B denotes the processing arm.
For the matched interior magnetic terms,
$\Delta\langle M_i\rangle=\langle M_i\rangle_B-\langle M_i\rangle_A$.
Each of the distinct inputs B, O and X has the three zero responses
shown. Nonzero entries are rounded to four decimal places.}
\label{tab:coherence-response}
\begingroup
\renewcommand{\baselinestretch}{1}
\normalfont\fontsize{9}{11}\selectfont
\renewcommand{\arraystretch}{1}
\setlength{\tabcolsep}{0pt}
\setlength{\arrayrulewidth}{0.4pt}
\def\RTableRow{\rule[-8pt]{0pt}{22pt}}
\def\RTablePairRow{\rule[-13pt]{0pt}{32pt}}
\begin{tabular}{@{}c@{}c@{}c@{}c@{}}
\hline
\RTableRow\makebox[0.23\linewidth][c]{Input state} &
\makebox[0.23\linewidth][c]{$d_{\rm keep}^{\rm orig}(B)$} &
\makebox[0.23\linewidth][c]{$\Delta\langle M_0\rangle$} &
\makebox[0.23\linewidth][c]{$\Delta\langle M_1\rangle$} \\
\hline
\RTableRow S & $0.4583$ & $-0.3742$ & $0$ \\
\RTableRow M & $0.1237$ & $-0.1010$ & $0$ \\
\RTableRow G & $0.7648$ & $+0.6873$ & $+0.6873$ \\
\RTableRow B, O, X & $0$ & $0$ & $0$ \\
\hline
\end{tabular}
\endgroup
\end{table*}

For S, M and G, path dephasing preserves the branch probabilities
and diagonal electric expectations but changes interior magnetic
expectations. For B, O and X the retained response is zero;
coherence absent from a retained block cannot affect that block.
The opposite response signs for G and S/M show that dephasing
does not have a uniform direction of magnetic-energy change.
Classical path probabilities therefore do not determine this
quantum task.

Dephasing and reconstruction also have different energy references.
For arm $r$, preprocessing changes
$E_{\rm in}(\rho)$ to $E_{\rm in}(\rho_r)$, while the map cost is
$\Delta E_{\rm map}^{(r)}=E_{\rm out}(\eta_r)-E_{\rm in}(\rho_r)$.
Their sum is the total change from the original input.
The boundary budget derived below bounds the map cost.
The target excitation $G_{\rm var}=E_{\rm out}(\eta_r)-E_0^{\rm p}$
compares the output with the periodic ground energy.
The longitudinal comparison uses independently minimized open and
periodic energies.

\label{subsec:pure-target-separation}
Matched information leaves room for different global-state
properties. The canonical output for G has purity about $0.21337$,
while the outputs for B and S are pure. A physical boundary state can still be
chosen after the interior task is fixed. The energy change associated with this
choice is examined in Section~\ref{sec:boundary-resources}.

\section{Reconstruction and a uniform boundary energy cost}
\label{sec:general-reconstruction}

\subsection{Domains and exact lengths}
\label{subsec:general-domains}
\label{subsec:general-lengths}

At larger cutoffs, equal rail endpoints can support different vertical or
singlet witnesses, and the corresponding complete columns must remain distinct.
There is also a domain issue because legal periodic walks need not all be
reachable from the canonical rail pair.
Write
\begin{equation}
\label{eq:construction-domains}
\begin{aligned}
 \mathcal O_N&=\mathcal H_{N,B}^{(q,\mathrm o)},\\
 \mathcal P_N&=\operatorname{span}\left\{\begin{gathered}
 \text{labelled $N$-edge}\\
 \text{closed walks in the}\\
 \text{component of $c_q$}
 \end{gathered}\right\}.
\end{aligned}
\end{equation}
Every open canonical path lies in this component.
The reconstruction acts on all density operators in these domains
and their same-charge gluing products, including correlated
two-system inputs. The full periodic sector is larger at
$B=20/3$, where a noncanonical flux component is active.
Appendix~\ref{subsec:component-proof} proves domain invariance
and excludes that component from ground-state minimization, while
keeping its states in the full spectrum.
Hilbert-space fragmentation gives a related account of dynamically disconnected
sectors~\cite{CiavarellaBauerHalimeh2025Fragmentation}; the exclusion here
follows from the separate argument in that appendix.

The physical self-loop used in the minimal example also supplies
the general length adjustment. Complete directed routes lead from
each canonical-component node to $c_q$ and back. Repeating the loop
between those routes makes every sufficiently long prescribed
bridge. The witnesses allow a reserve $r=3$ at $B=4$ and $r=5$
at the other five cutoffs. These are sufficient choices, not
shortest-length claims; the proof is in
Appendix~\ref{subsec:exact-length-proof}.

\begin{table*}[tbp]
\caption{Four reconstructions at fixed $(B,q)$; $\mathcal O_N$ and
$\mathcal P_N$ are defined in Eq.~\eqref{eq:construction-domains}.
Brackets count complete graph edges. The sufficient thresholds are
$m,n\geq r$ for gluing and $N\geq2r-1$ for boundary changes, giving
physical lengths $(m+n)a$ and $Na$, respectively.}
\label{tab:four-operations}
\centering
\begingroup
\renewcommand{\baselinestretch}{1}
\normalfont\fontsize{9}{11}\selectfont
\renewcommand{\arraystretch}{1}
\setlength{\tabcolsep}{0pt}
\setlength{\arrayrulewidth}{0.4pt}
\def\RTableRow{\rule[-8pt]{0pt}{22pt}}
\def\RTablePairRow{\rule[-13pt]{0pt}{32pt}}
\begin{tabular}{@{}c@{}c@{}c@{}c@{}}
\hline
\RTableRow\makebox[0.27\linewidth][c]{Reconstruction} &
\makebox[0.29\linewidth][c]{Retained paths} &
\makebox[0.215\linewidth][c]{Inserted paths} &
\makebox[0.145\linewidth][c]{Total edges} \\
\hline
\RTablePairRow $\mathcal O_m\otimes\mathcal O_n\to\mathcal O_{m+n}$
 & \begin{tabular}[c]{@{}c@{}}$c_q\to x\ [m+1-r]$\\$y\to
c_q\ [n+1-r]$\end{tabular}
 & $x\to y\ [2r-1]$ & $m+n+1$\\
\RTablePairRow $\mathcal P_m\otimes\mathcal P_n\to\mathcal P_{m+n}$
 & \begin{tabular}[c]{@{}c@{}}$y_1\to x_1\ [m-r]$\\$y_2\to
x_2\ [n-r]$\end{tabular}
 & \begin{tabular}[c]{@{}c@{}}$x_1\to y_2\ [r]$\\$x_2\to y_1\ [r]$\end{tabular}
& $m+n$\\
\RTableRow $\mathcal O_N\to\mathcal P_N$
 & $x\to y\ [N+1-2r]$
 & $y\to x\ [2r-1]$ & $N$\\
\RTablePairRow $\mathcal P_N\to\mathcal O_N$
 & $x\to y\ [N-(2r-1)]$
 & \begin{tabular}[c]{@{}c@{}}$c_q\to x\ [r]$\\$y\to c_q\ [r]$\end{tabular} &
$N+1$\\
\hline
\end{tabular}
\endgroup
\end{table*}

Table~\ref{tab:four-operations} gives the four length accounts.
Open gluing removes the final $r$ columns of the first input and
the initial $r$ of the second. Periodic gluing removes $r$ columns
from each ring and uses cross bridges to form one ring.
A boundary change removes both open end pieces or a single
periodic piece, respectively. Cuts are fixed and conditioned on
jointly, so retained correlations survive. The common charge is
preserved, not added. Empty retained segments at a threshold have
coincident endpoints and one identity path; only realized joint
cut data are used.

\subsection{Fixed legal resources preserve the same information}
\label{subsec:general-channel}

The conditional decomposition of
Eq.~\eqref{eq:general-input-space} applies to each operation.
Now $b$ contains all exposed rails, $\mathcal L_b$ contains all
removed pieces and $\mathcal J_b$ contains the allowed replacement
tuples. Complete-path concatenation gives isometries $W_b$ with
orthogonal ranges for distinct $b$.
For fixed density operators $\tau_b$ on $\mathcal J_b$,
\begin{equation}
\label{eq:resource-channel}
\begin{aligned}
 &\mathcal R_\tau(\rho)
 =\sum_bW_b[\sigma_b(\rho)\otimes\tau_b]W_b^\dagger,\\
 &\operatorname{Tr}\tau_b=1.
\end{aligned}
\end{equation}
The choices include branches unoccupied by a particular input.
Appendix~\ref{app:channel-energy} proves CPTP, exact block extraction and matched physical
statistics for this whole family. The canonical channel is the
choice $\tau_b=|w_b\rangle\langle w_b|$.
No input factorization, branch selection or renormalization is used.
Operator-algebraic conditional expectations enter a different
constraint-quantization
framework~\cite{KlingerLeigh2024ConditionalExpectations}, which is not used to
construct this channel.

\subsection{Why the energy change stays near the boundary}
\label{subsec:energy-mechanism}

Reconstruction changes only a bounded neighborhood of the cuts.
More strongly, every unchanged interior Hamiltonian term has the
same conditional matrix before and after reconstruction.
Preserving $\sigma_b$ therefore cancels its contribution exactly.
This applies to all paired Hamiltonian terms, beyond the subset
of observables used in the controls.

Let $B_{\rm in}$ and $B_{\rm out}$ collect unpaired terms in the
removed or prepared pieces, their interaction-range neighborhoods,
affected boundary collars and the wrap where needed.
On the invariant input domain,
\begin{equation}
\label{eq:energy-defect}
 \mathcal R_\tau^\dagger(H_{\rm out})-H_{\rm in}
 =\mathcal R_\tau^\dagger(B_{\rm out})-B_{\rm in}.
\end{equation}
Here $\mathcal R_\tau^\dagger$ is the channel adjoint defined by
trace pairing. For gluing, $H_{\rm in}$ is the sum Hamiltonian.
The output is tested with the full physical Hamiltonian, which
need not preserve the matching range.
Positivity and unitality of the adjoint then give
\begin{equation}
\label{eq:local-energy-budget}
 |\Delta E_{\rm map}|
 \leq\|\mathcal R_\tau^\dagger(H_{\rm out})-H_{\rm in}\|
 \leq K_{B,q,\mathrm{op}} .
\end{equation}
Only finitely many bounded local terms remain.
Adding retained length adds matched terms, not new boundary
neighborhoods, so $K_{B,q,\mathrm{op}}$ is uniform in the allowed
lengths and fixed legal resource choices.
Appendix~\ref{subsec:energy-proof} supplies the norm bound.
Figure~\ref{fig:energy-cancellation} shows this mechanism.

\begin{figure*}[tbp]
\centering
\includegraphics[width=1.0\textwidth]{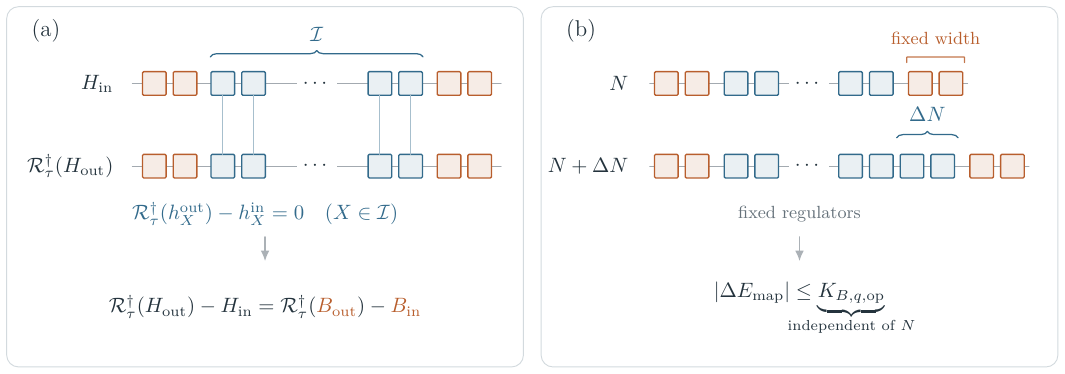}
\caption{Interior cancellation and a length-independent boundary
energy budget. (a) Blue squares denote paired interior Hamiltonian
terms indexed by $\mathcal I$. In the lower row, the output terms
are pulled back by $\mathcal R_\tau^\dagger$. Their differences vanish
on the invariant input domain. Orange portions denote the unpaired
sums $B_{\rm in}$ and $\mathcal R_\tau^\dagger(B_{\rm out})$, including
the removed or prepared pieces and their affected neighborhoods
[Eq.~\eqref{eq:energy-defect}].
(b) Increasing the retained length adds matched terms while the
affected neighborhoods retain uniformly bounded widths.
At fixed regulator parameters and operation, the upper bound
$K_{B,q,\mathrm{op}}$ in Eq.~\eqref{eq:local-energy-budget} is uniform
in the allowed lengths. Counts and distances are schematic; the actual
energy change may depend on length and input state}

\label{fig:energy-cancellation}
\end{figure*}

The budget is at fixed regulator and does not assert a small or
optimized actual cost. The budget can be used both to lower boundary energy with
the interior information fixed and to compare variational energies at different
lengths or boundaries.

\section{Boundary energy at fixed interior information}
\label{sec:boundary-resources}

The prepared boundary state can still be chosen under exact block preservation.
The output energy can change with this choice because the full Hamiltonian
contains boundary and interface terms, while the matched interior expectations
remain fixed.
For a reference input $\rho_\star$, set $S_b=\sigma_b(\rho_\star)$.
The resource objective is
\begin{equation}
\label{eq:boundary-resource-objective}
\begin{aligned}
 & G_b=\operatorname{Tr}_{\mathcal K_b}
 [(S_b\otimes\mathbf1)W_b^\dagger H_{\rm out}W_b],\\
 & E_{\rm out}(\rho_\star;\tau)=\sum_b\operatorname{Tr}(G_b\tau_b).
\end{aligned}
\end{equation}
The original branch weight is included in $S_b$.
The Hermitian objective attains its minimum on a pure vector in
its lowest eigenspace, as shown in
Appendix~\ref{subsec:resource-objective-proof}.
In general that direction depends on $\rho_\star$.
Once chosen, it must be held fixed to define one linear channel
for subsequent inputs. The optimization is within this legal
product-preparation family, not over all physical outputs.

\subsection{Why the B4 choice works for different inputs}
\label{subsec:b4-resource-designs}

The $B=4$, $q=1$, $N=8$, $g^2=a=1$ example is considered again.
At fixed cuts, the retained path has two free rail nodes and the
five-column bridge has four. Since each can be $x,y,c_1$, their
spaces have dimensions nine and 81.
A B4 plaquette cycles the middle rail node of its two-column
neighborhood. The two plaquettes across the cuts change the cut
labels and therefore have zero same-branch compression.
The remaining magnetic terms act inside either the retained path
or the bridge. Including every electric and interface term gives
\begin{equation}
\label{eq:b4-resource-factorization}
 W_b^\dagger H^{\rm p}W_b
 =K_b^{\rm int}\otimes I_{81}+I_9\otimes T_b .
\end{equation}
This support and selection-rule argument is completed in
Appendix~\ref{subsec:b4-factorization-proof}.
It is specific to this B4 geometry; no such separation is assumed
at a larger cutoff. The cross-cut terms remain in the full
Hamiltonian, with zero expectation in the branch-diagonal output.

Consequently
\begin{equation}
\label{eq:b4-reference-objective}
 G_b=\operatorname{Tr}(S_bK_b^{\rm int})I_{81}+p_bT_b,
 \qquad p_b=\operatorname{Tr}S_b .
\end{equation}
For $p_b>0$, the scalar shift and positive multiplier leave the
preferred direction of $T_b$ unchanged. The boundary choice can
therefore be fixed independently of the evaluated input in this
example, including a normalized choice for every zero-weight branch.

We compare the canonical basis bridge $|w_b\rangle$, the
lowest-energy basis path, and a coherent pure bridge selected from
the lowest numerical eigenspace of $T_b$. The same fixed choices
are used for all six inputs; their numerical prescriptions and
qualification are given in Supplementary Information.
For a fixed pure family $|v_b\rangle$, subtraction cancels the
internal contribution:
\begin{equation}
\label{eq:resource-gain-weighted}
 \begin{aligned}
 &E_{\rm can}(\rho)-E_v(\rho)\\
 &\quad=\sum_b p_b(\rho)
 [\langle w_b|T_b|w_b\rangle-\langle v_b|T_b|v_b\rangle].
\end{aligned}
\end{equation}
The input supplies only its original weights.
Figure~\ref{fig:resource-gains} shows the local gains and their
weighted sums. Some basis-path choices give no gain, whereas
coherent bridges use magnetic couplings between paths to lower
every branch target in this example. Equal branch weights also
explain the equal total gains for B, S and O.

\begin{figure*}[tbp]
\centering
\includegraphics[width=0.96\textwidth]{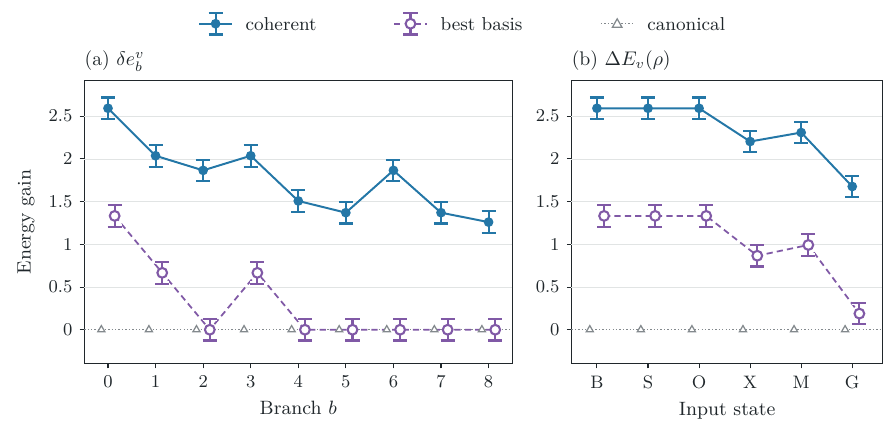}
\caption{Boundary energy gains at $B=4$, $q=1$, $N=8$ and $g^2=a=1$.
(a) Local gains
$\delta e_b^v=\langle w_b|T_b|w_b\rangle-\langle v_b|T_b|v_b\rangle$
for branches $b=3s_L+s_R$, using node indices $0,1,2$ for
$x,y,c_1$ of Eq.~\eqref{eq:b4-vertices}.
(b) Gains $\Delta E_v(\rho)=E_{\rm can}(\rho)-E_v(\rho)$ relative to
the canonical output, obtained from the same fixed preparations and
the original branch weights through
Eq.~\eqref{eq:resource-gain-weighted}.
Both panels use the same energy scale.
Blue filled circles and solid lines denote coherent preparations;
purple open circles and dashed lines denote best-basis preparations;
gray triangles mark the canonical zero reference.
Circle centers give the values. The fixed-size stems and caps are
marker elements, not uncertainty intervals.
Horizontal offsets separate categorical markers, and connecting
lines guide the eye}

\label{fig:resource-gains}
\end{figure*}

For G, the coherent choice lowers the canonical output energy
from about $5.598$ to $3.920$, still above the independent
periodic ground energy of about $2.588$. The gain concerns
the fixed input and the chosen legal resource family.

\subsection{Energy improvement without purification}
\label{subsec:resource-mixedness}

For pure resources, changing the boundary state leaves the nonzero
output spectrum unchanged. Orthogonal branches give
\begin{equation}
\label{eq:resource-purity}
 \operatorname{Tr}[\mathcal R_\tau(\rho)^2]
 =\sum_b\operatorname{Tr}(\sigma_b^2)\operatorname{Tr}(\tau_b^2).
\end{equation}
Appendix~\ref{subsec:resource-spectrum-proof} gives the spectral
argument. All three preparations use pure resources and have the
same output purity at fixed input. Boundary coherence lowers G's
energy while its output remains mixed.

\section{Center-flux energetics and the longitudinal limit}
\label{sec:longitudinal}

The uniform boundary cost allows the reconstruction to be used for a variational
comparison. A state of one ladder is used to construct a legal trial state of
another, with an energy overhead that does not grow with retained length.
Any fixed legal resource suffices; energy optimization is not required.
Lattice ground-state energy densities have also been studied by other
Hamiltonian
methods~\cite{HollenbergWitte1994EnergyDensity,%
BronzanGunal1994PureSU3LargeLattice}.

\subsection{Gluing establishes the sector densities}
\label{subsec:sector-energies}
\label{subsec:sector-limits}

Fix one of the six cutoffs, $a>0$, $g^2>0$ and the transverse width.
The local and domain conditions H1--H4 are given in
Appendix~\ref{subsec:hypotheses}.
Let $E_{0,B}^{(s,\alpha)}(Na)$ be the full-sector ground energy
for $s=\mathrm{vac},\mathrm{flux}$ ($q=0,1$) and
$\alpha=\mathrm o,\mathrm p$. The vacuum-subtracted energy is
\begin{equation}
\label{eq:flux-excess}
 \Delta E_B^{(\alpha)}(Na)
 =E_{0,B}^{(\mathrm{flux},\alpha)}(Na)
  -E_{0,B}^{(\mathrm{vac},\alpha)}(Na).
\end{equation}
Every comparison uses the same regulator, length and Hamiltonian
normalization. The periodic minimum can be approached in the
canonical operation domain: component invariance and the strict
energy exclusion of the exceptional $B=20/3$ flux block establish
this in Appendix~\ref{subsec:exception-energy}.
Thus the following statements concern full-sector minima.

Glue two approximate minimizing states in the same sector and
boundary realization. Their product is a sufficient trial input;
the channel itself also permits correlations.
The first two operations in Table~\ref{tab:four-operations}
and the local energy budget give
\begin{equation}
\label{eq:sector-gluing}
 \begin{aligned}
 E_{0,B}^{(s,\alpha)}((m+n)a)
 &\leq E_{0,B}^{(s,\alpha)}(ma)\\
 &\quad+E_{0,B}^{(s,\alpha)}(na)+C_B^{(s,\alpha)}
\end{aligned}
\end{equation}
for both $m,n$ in the gluing tail, with a constant independent
of their lengths. The bounded local Hamiltonian also gives
$E_0(Na)\geq-AN-D$.
After adding $C_B^{(s,\alpha)}$, the energy is subadditive on
that tail. Appendix~\ref{subsec:stability-tail} proves the
limit while keeping the final remainder block within the allowed
tail.

\begin{theorem}[Existence of fixed-cutoff sector and line-energy
densities]\label{thm:bulkexist}
Under H1--H4, for every cutoff studied here and $\alpha\in\{\mathrm o,\mathrm
p\}$, the sector densities
\[
e_B^{(s,\alpha)}
=
\lim_{N\to\infty}\frac{E_{0,B}^{(s,\alpha)}(Na)}{Na},
\qquad s\in\{\mathrm{vac},\mathrm{flux}\},
\]
exist and are finite.  Consequently,
\[
\sigma_B^{(\alpha)}
=
e_B^{(\mathrm{flux},\alpha)}-e_B^{(\mathrm{vac},\alpha)}
=
\lim_{N\to\infty}\frac{\Delta E_B^{(\alpha)}(Na)}{Na}
\]
exists and is finite.
\end{theorem}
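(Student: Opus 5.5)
The plan is a Fekete-type argument applied to the shifted energies. Fix the cutoff $B$, the realization $\alpha$ and the sector $s$. Write $E(N)=E_{0,B}^{(s,\alpha)}(Na)$ and $f(N)=E(N)+C_B^{(s,\alpha)}$. The gluing inequality \eqref{eq:sector-gluing} then reads $f(m+n)\le f(m)+f(n)$ for all $m,n\ge r$. Its ingredients are the first two operations of Table~\ref{tab:four-operations}, applied to products of approximate minimizers, and the uniform budget \eqref{eq:local-energy-budget}. For $\alpha=\mathrm p$ this requires that the full-sector periodic minimum is attained, or approached, inside the canonical domain $\mathcal P_N$. I take this from H1--H4 and the exclusion of the exceptional $B=20/3$ block stated before the theorem. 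The uniform local bound gives $f(N)\ge -AN-D'$. So $\liminf f(N)/N\ge -A$ is finite, and $f(N)/N$ and $E(N)/N$ have the same limit because the shift is a constant.

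Next I would prove the tail version of Fekete's lemma. Set $\ell=\inf_{N\ge r} f(N)/N$, which lies in $[-A,\infty)$. Fix $\varepsilon>0$ and choose $k\ge r$ with $f(k)/k<\ell+\varepsilon$. For large $N$, write $N=qk+j$ with the remainder $j\in\{r,\dots,r+k-1\}$, so that the remainder block stays in the gluing tail, as the paper demands. Iterating subadditivity, which is legal since every block has length at least $r$, gives
\[
f(N)\le q f(k)+f(j)\le q f(k)+\max_{r\le j<r+k} f(j).
\]
The maximum is finite because it ranges over finitely many values, each finite since the Hilbert spaces are finite-dimensional. Dividing by $N$ and letting $N\to\infty$ gives $\limsup f(N)/N\le f(k)/k<\ell+\varepsilon$. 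Together with $\liminf\ge\ell$, this shows the limit exists and equals $\ell$. Dividing by the fixed $a>0$ yields $e_B^{(s,\alpha)}$ finite.

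The second claim is then arithmetic. Both sector limits exist and are finite for the same $\alpha$, so the limit of the difference of $E/(Na)$ is the difference of the limits. By \eqref{eq:flux-excess} this is exactly $\lim \Delta E_B^{(\alpha)}(Na)/(Na)$. Vacuum subtraction is thus taken only after the sector limits exist, consistent with the text.

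I expect the main obstacle to be justifying \eqref{eq:sector-gluing} as an inequality between \emph{full-sector} minima rather than canonical-domain minima, chiefly in the periodic flux sector at $B=20/3$. There I would first show that the infimum over $\mathcal P_N$ equals the full-sector ground energy, using component invariance and the strict energy exclusion. I would then feed $\varepsilon$-approximate minimizers into the gluing channel and absorb the $2\varepsilon$ into the limit. A secondary check is that $C_B^{(s,\alpha)}$ is genuinely independent of $m$ and $n$, which follows because the budget $K_{B,q,\mathrm{op}}$ is uniform over allowed lengths for a fixed legal resource.
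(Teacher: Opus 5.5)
Your proposal is correct and follows the paper's proof: you shift by the gluing constant and run a tail Fekete argument in which the final remainder block stays inside the gluing tail (your $j\in\{r,\dots,r+k-1\}$ plays the role of the paper's $k+r_0$). You then use linear stability for the lower bound and subtract the vacuum density only after both sector limits exist. The one small point to add is that the remainder values $f(j)$, and the bound $\ell<\infty$, are finite only because the physical path spaces are nonempty on the tail; finite-dimensionality alone does not give this, and the paper obtains nonemptiness by padding with the self-loop at $c_q$.
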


The argument is applied to each sector separately. It does not
assume subadditivity of the vacuum-subtracted difference.
The proof first gives energy per plaquette; dividing by fixed
$a$ gives energy per physical length.
Infinite-lattice QCD~\cite{AddIV05,AddIV06} and charge or boundary
algebras~\cite{AddII08,AddII12,AddII23,AddII19,AddVII02} provide related
settings, distinct from this finite-ladder construction.
Local stability in weakly interacting, locally gapped spin
systems~\cite{HenheikTeufelWessel2022LocalStability} is another comparison;
the present bound uses the explicit reconstruction budget.

\subsection{Opening and closing change only a boundary energy}
\label{subsec:boundary-independence}

The last two operations in Table~\ref{tab:four-operations}
compare open and periodic ladders at the same $Na$.
Their sufficient threshold is $N_0=2r-1$, namely five at $B=4$
and nine at the other cutoffs.
Apply each map to an approximate minimizer in its own domain.
The two variational comparisons bound the periodic energy by
the open one plus a constant, and the open energy by the periodic
one plus another constant. These maps need not be inverses.
Appendix~\ref{subsec:boundary-proof} proves both inequalities
and uses the already existing limits to obtain the second result.

\begin{theorem}[Fixed-cutoff open/periodic boundary-condition
independence]\label{thm:boundary}
Under H1--H4, for every cutoff studied here and
$s\in\{\mathrm{vac},\mathrm{flux}\}$, there is a finite constant
$C_{\partial,B}^{(s)}$, independent of $N$, such that
\[
\left|E_{0,B}^{(s,\mathrm p)}(Na)-E_{0,B}^{(s,\mathrm o)}(Na)\right|
\le C_{\partial,B}^{(s)}
\]
for all sufficiently large $N$.  Therefore
\[
e_B^{(s,\mathrm p)}=e_B^{(s,\mathrm o)},
\qquad
\sigma_B^{\mathrm p}=\sigma_B^{\mathrm o}.
\]
\end{theorem}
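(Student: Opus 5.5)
The plan is to prove two one-sided variational inequalities at equal length, using the last two reconstructions of Table~\ref{tab:four-operations} together with the energy budget~\eqref{eq:local-energy-budget}. We then pass to the limit using Theorem~\ref{thm:bulkexist}.

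Fix the cutoff $B$, the sector $s$ (charge $q=0$ or $1$) and $N\geq N_0=2r-1$. For the direction open to periodic, let $\epsilon>0$ and choose a density operator $\rho_{\mathrm o}$ on $\mathcal O_N$ with $\operatorname{Tr}(\rho_{\mathrm o}H^{\rm o})\leq E_{0,B}^{(s,\mathrm o)}(Na)+\epsilon$. The open space is exactly $\mathcal O_N$, so no domain issue arises. Apply the map $\mathcal R_\tau:\mathcal O_N\to\mathcal P_N$ with any fixed legal resource, for instance the canonical bridges. Appendix~\ref{app:channel-energy} makes this map CPTP, so the output is a legal periodic state in the canonical component and in particular in the full periodic sector. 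By Eq.~\eqref{eq:local-energy-budget},
\[
E_{0,B}^{(s,\mathrm p)}(Na)\leq\operatorname{Tr}[\mathcal R_\tau(\rho_{\mathrm o})H^{\rm p}]
\leq E_{0,B}^{(s,\mathrm o)}(Na)+\epsilon+K_{B,q,\mathrm{o}\to\mathrm{p}}.
\]

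For the reverse direction we need a near-minimizer of the full periodic sector that lies in the operation domain $\mathcal P_N$. This is the step I expect to be the main obstacle, because at $B=20/3$ the full periodic flux sector contains a noncanonical component that the map $\mathcal P_N\to\mathcal O_N$ is not defined on. I would invoke the result of Appendix~\ref{subsec:exception-energy} quoted before Theorem~\ref{thm:bulkexist}. Component invariance says the Hamiltonian is block diagonal across components. The strict energy exclusion says the exceptional block lies strictly above the canonical minimum for large $N$. Together they give that the full-sector minimum equals the minimum over $\mathcal P_N$, which is what hypotheses H1--H4 are meant to guarantee. At every other cutoff and sector, $\mathcal P_N$ is already the full periodic sector.

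With that in hand, take $\rho_{\mathrm p}$ on $\mathcal P_N$ within $\epsilon$ of $E_{0,B}^{(s,\mathrm p)}(Na)$ and apply the map $\mathcal P_N\to\mathcal O_N$. The same budget gives
\[
E_{0,B}^{(s,\mathrm o)}(Na)\leq E_{0,B}^{(s,\mathrm p)}(Na)+\epsilon+K_{B,q,\mathrm{p}\to\mathrm{o}}.
\]
Let $\epsilon\to0$ and set $C_{\partial,B}^{(s)}=\max(K_{B,q,\mathrm{o}\to\mathrm{p}},K_{B,q,\mathrm{p}\to\mathrm{o}})$. This constant is independent of $N$ because both constants in Eq.~\eqref{eq:local-energy-budget} are uniform in the allowed lengths for fixed resources. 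The two maps need not be inverses; only the two inequalities are used.

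For the limit statements, divide the two-sided bound by $Na$. By Theorem~\ref{thm:bulkexist} both ratios $E_{0,B}^{(s,\alpha)}(Na)/(Na)$ converge, and $C_{\partial,B}^{(s)}/(Na)\to0$, so $e_B^{(s,\mathrm p)}=e_B^{(s,\mathrm o)}$ for $s=\mathrm{vac},\mathrm{flux}$. Subtracting the vacuum density from the flux density in each realization gives $\sigma_B^{\mathrm p}=\sigma_B^{\mathrm o}$. The subtraction is performed only after the sector limits exist, as in Theorem~\ref{thm:bulkexist}.
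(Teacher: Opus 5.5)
Your proposal is correct and follows the paper's own proof essentially step for step. It uses the same ingredients: two independent equal-length variational comparisons through the $\mathcal O_N\to\mathcal P_N$ and $\mathcal P_N\to\mathcal O_N$ maps with the uniform budget; the component-invariance and exceptional-block argument to place a periodic near-minimizer in the canonical domain; $C_{\partial,B}^{(s)}$ taken as the maximum of the two constants; and division by $Na$ using the limits from Theorem~\ref{thm:bulkexist}, with the vacuum subtracted from the flux only after those limits exist. One small imprecision: the exceptional-block gap $\frac83 g^2N$ holds for every $N$, not only for large $N$, but this does not affect your argument.
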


At finite matched length the triangle inequality also gives
\begin{equation}
\label{eq:boundary-density-budget}
\begin{aligned}
 &\left|\frac{\Delta E_B^{(\mathrm p)}(Na)
                 -\Delta E_B^{(\mathrm o)}(Na)}{Na}\right|\\
 &\quad\leq\frac{C_{\partial,B}^{(\mathrm{flux})}
                  +C_{\partial,B}^{(\mathrm{vac})}}{Na}.
\end{aligned}
\end{equation}
This bounds disagreement between boundaries, not the distance
of either density from its own limit.
Studies of twisted partition functions provide other measures of boundary
sensitivity~\cite{Kanazawa2009TomboulisYaffeSUN,AddII01,AddII07}.

\begin{figure*}[!tbp]
\centering
\includegraphics[width=0.6266666667\textwidth]{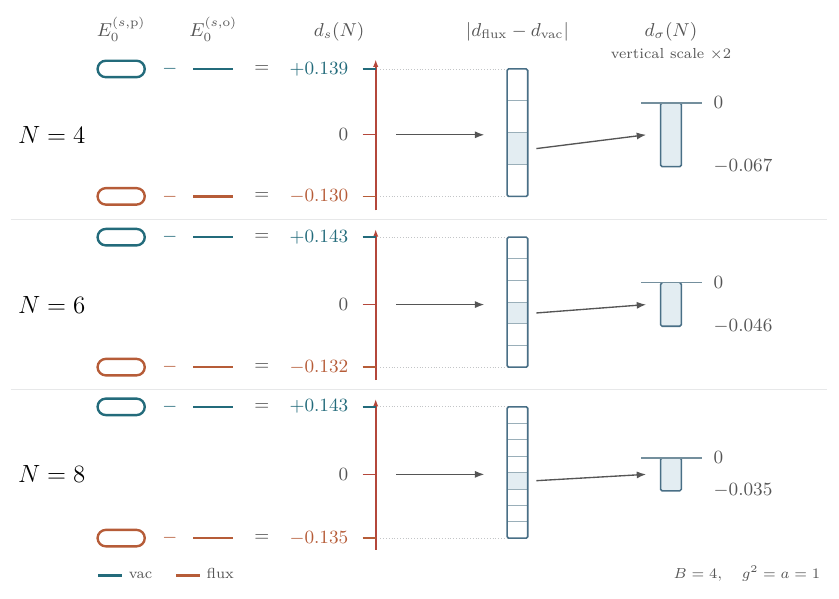}
\caption{Boundary differences from independent full-sector ground
energies at $B=4$ and $g^2=a=1$. Rows show $N=4,6,8$.
Teal denotes vacuum and rust fundamental flux; capsules and strips
denote periodic and open boundaries. Colored differences are positioned
by $d_s$ in Eq.~\eqref{eq:same-length-differences}, not by individual
ground energies. Each central band represents $|d_{\rm flux}-d_{\rm vac}|$,
divided into $N$ equal cells of $a|d_\sigma|$. Each right bar extends down
from its top-edge zero to negative $d_\sigma$, with twice the cell height
at $a=1$. Cells and arrows show arithmetic, not spatial partitions or
state transformations. The $N=4$ row is below the sufficient channel
threshold $N_0=5$}

\label{fig:matched-spectra}
\end{figure*}

\subsection{Finite spectra and physical scope}
\label{subsec:same-length-diagnostic}
\label{subsec:longitudinal-scope}

For $B=4$ and $g^2=a=1$, independently minimized full-sector
energies at $N=4,6,8$ give
\begin{equation}
\label{eq:same-length-differences}
\begin{aligned}
 d_s(N)&=E_{0,4}^{(s,\mathrm p)}(Na)-E_{0,4}^{(s,\mathrm o)}(Na),\\
 d_\sigma(N)&=[d_{\rm flux}(N)-d_{\rm vac}(N)]/(Na).
\end{aligned}
\end{equation}
Figure~\ref{fig:matched-spectra} shows positive vacuum shifts
and negative flux shifts. The density differences are negative
and decrease in magnitude across these three samples. These finite
cases illustrate boundary dependence; they do not determine a
convergence rate.
Minimally truncated and finite-lattice SU(3) spectra provide related
benchmarks~\cite{ChenMullerYao2026,ChinLongRobson1988SU3GroundState,%
HamerSamarasBursill2000SU3GFMC}.

The results establish a center-flux line-energy density at fixed
regulator and cutoff. Its positivity, cutoff removal, increasing
transverse extent and continuum interpretation require further
control.
For comparison, scaling has been studied with finite representations and
improved
actions~\cite{AddIV03,DeGrandEtAl1996FixedPointSU3,%
ChinLongRobson1986GlueballScaling}, and with digitized SU(3)
theories~\cite{AlexandruBedaqueBrettLamm2022DigitizedQCD} or qubit-regularized
and partitioned SU(2) theories~\cite{SiewChandrasekharanBhattacharya2026SU2Qubit,%
JakobsEtAl2025PartitioningsSU2}.
A static-source interpretation additionally
requires
the same central bulk density and endpoint or contact terms
subextensive in source separation; fixing open rail endpoints
does not itself add static quarks.
Static-source flux-tube spectra~\cite{AddVI24,AddVI25} give a distinct physical
comparison.

\section{Conclusions}
\label{sec:conclusions}

Boundary reconstruction in the regulated SU(3) ladder can preserve
the quantum information needed for an interior physical task.
The exposed representations are matched using legal column paths, while the
required length is supplied by physical self-loops. The original branch weights
and within-branch coherence are preserved by conditioning on the cuts. The
dephasing controls show why retaining only path
probabilities would lose magnetic information.

The boundary state remains a separate choice. In the B4 example, the output
energy is lowered by coherent legal resources while the conditional blocks and
pure-resource output spectrum remain unchanged.
This energy improvement is distinct from purification and global ground-state
preparation.

For the longitudinal problem, the essential consequence is exact
interior cancellation. Same-sector gluing and bidirectional comparisons at equal
length are obtained from a uniform boundary budget. Same-sector gluing
establishes the sector densities, while the bidirectional comparisons make their
vacuum-subtracted difference independent of open or periodic boundaries. Both
statements are at fixed regulator,
with the physical scope stated in
Sec.~\ref{subsec:longitudinal-scope}.

\begin{appendices}
\counterwithin*{equation}{section}
\displayappendixsection{Conditional reconstruction and its energy cost}
\label{app:channel-energy}

Fix an operation in Table~\ref{tab:four-operations}, its cuts and
its allowed length. The domains are those of
Appendix~\ref{app:construction}. We use the finite-dimensional
operator-sum characterization of
channels~\cite{Kraus1971GeneralStateChanges,Choi1975,Watrous2018}.

\subsection{Physical coordinates and one channel proof}
\label{subsec:conditional-proof}
\label{subsec:unified-kraus}

At fixed realized cut data $b$, cutting a complete path gives a unique
retained tuple $k$ and removed tuple $\lambda$. Conversely, every
compatible pair concatenates: the exposed rails are fixed and each
column already contains both singlet witnesses. This bijection of
orthonormal bases gives $U_b$ in
Eq.~\eqref{eq:general-input-space}, including all multiplicities.
Concatenation with a legal new tuple similarly gives $W_b$.
Fixed segment positions recover $k$, the new tuple and $b$ from the
output, so $W_b$ is an isometry and different $b$ have orthogonal
ranges. For joint gluing, these coordinates include both retained
paths and both removed pieces; tracing the latter preserves
correlations between the retained paths.

Write each fixed normalized resource as
$\tau_b=\sum_\mu t_{b\mu}|j_{b\mu}\rangle\langle j_{b\mu}|$,
where $t_{b\mu}\geq0$ and $\sum_\mu t_{b\mu}=1$.
For an orthonormal basis of $\mathcal L_b$, define
\begin{equation}
\label{eq:unified-resource-kraus}
 L_{b\lambda\mu}=\sqrt{t_{b\mu}}\,
 W_b(\mathbf1\otimes|j_{b\mu}\rangle)
 (\mathbf1\otimes\langle\lambda|)U_bP_b.
\end{equation}
Their operator sum is Eq.~\eqref{eq:resource-channel}.
It is completely positive, has legal same-charge outputs, and obeys
\begin{equation}
\label{eq:unified-completeness}
 \begin{aligned}
 &\sum_{b,\lambda,\mu}L_{b\lambda\mu}^\dagger L_{b\lambda\mu}\\
 &\quad=\sum_bP_bU_b^\dagger
   (\mathbf1_{\mathcal K_b}\otimes\mathbf1_{\mathcal L_b})U_bP_b\\
 &\quad=\mathbf1_{\mathcal H_{\rm in}^{\rm dom}}.
\end{aligned}
\end{equation}
Thus it preserves trace on the whole declared domain, including
inputs correlated across gluing pieces. Resources are specified even
on branches with zero weight for the input being evaluated.
Orthogonality and $\operatorname{Tr}\tau_b=1$ give
\begin{equation}
\label{eq:general-extraction}
 \begin{aligned}
 & W_b^\dagger\mathcal R_\tau(\rho)W_b=\sigma_b\otimes\tau_b,\\
 & \operatorname{Tr}_{\mathcal J_b}
 [W_b^\dagger\mathcal R_\tau(\rho)W_b]=\sigma_b.
\end{aligned}
\end{equation}
There is no division by a branch weight. The tensor product is created
by preparation, rather than assumed for the input. Taking
$\tau_b=|w_b\rangle\langle w_b|$ proves the canonical case as well.

\subsection{Matched observables and interior cancellation}
\label{subsec:unified-identities}
\label{subsec:energy-proof}

Choose physical generators whose active column and intertwiner support
stays in the retained region. They may use exposed rails as fixed
controls, but change neither those labels nor the old or new pieces.
For every compatible completion, their retained action is the same
matrix $\mathsf a_{\mu,b}$. The support statement also holds for each
adjoint: all changed retained paths remain in $\mathcal K_b$ and no
changed completion is produced. Hence
\begin{equation}
\label{eq:general-generator-intertwining}
 \begin{aligned}
 & O_\mu^{\rm out}W_b
 =W_b(\mathsf a_{\mu,b}\otimes\mathbf1_{\mathcal J_b}),\\
 & U_b O_\mu^{\rm in}P_b
 =(\mathsf a_{\mu,b}\otimes\mathbf1_{\mathcal L_b})U_bP_b.
\end{aligned}
\end{equation}
The output range is therefore reducing for these generators.
Products and adjoints give a matched physical algebra, which need
not be all regional gauge-invariant observables or all matrices on
$\bigoplus_b\mathcal K_b$. For every matched pair in that algebra,
\begin{equation}
\label{eq:unified-trace-pairing}
 \operatorname{Tr}[O^{\rm out}\mathcal R_\tau(\rho)]
 =\sum_b\operatorname{Tr}(\mathsf a_b\sigma_b)
 =\operatorname{Tr}(O^{\rm in}\rho).
\end{equation}

Apply the same support argument to every paired interior Hamiltonian
term, whether or not it belongs to the smaller set of measured
observables. Trace pairing for all domain density operators gives
$\mathcal R_\tau^\dagger(h_X^{\rm out})=h_X^{\rm in}$ on that domain.
Summing cancels the interior and proves
Eq.~\eqref{eq:energy-defect}. The output Hamiltonian is the full
physical operator; it need not preserve the matching range.

Trace preservation makes the adjoint unital, and complete positivity
makes it positive. For self-adjoint $h$,
\begin{equation}
\label{eq:positive-order-bound}
\begin{aligned}
 &-\|h\|\mathbf1\leq h\leq\|h\|\mathbf1\\
 &\quad\Longrightarrow\quad
 \|\mathcal R_\tau^\dagger(h)\|\leq\|h\|.
\end{aligned}
\end{equation}
Magnetic terms are grouped as
$-g^{-2}(\Box+\Box^\dagger)$ before this estimate is used.
The triangle inequality bounds the two remaining boundary sums.
At fixed regulator, H1--H3 give a common local norm bound
$h_{\max}$ and a finite interaction range. A fixed number of cuts
and the bounded replacement lengths in
Table~\ref{tab:four-operations} leave at most
$n_{\partial,B,q,\mathrm{op}}$ unpaired terms on both sides together.
Thus a sufficient choice in Eq.~\eqref{eq:local-energy-budget} is
\begin{equation}
\label{eq:uniform-budget-choice}
 K_{B,q,\mathrm{op}}
 =n_{\partial,B,q,\mathrm{op}}h_{\max}.
\end{equation}
It is independent of the input, longitudinal lengths and fixed legal
resource direction. Overlapping short boundary neighborhoods cannot
increase this count. The argument also applies to the sum Hamiltonian
of two inputs. The constant may depend on the fixed cutoff and
couplings; it bounds map cost, not excitation above a target ground
state.

\subsection{The resource objective and the B4 separation}
\label{subsec:resource-objective-proof}

Set $H_{bb}=W_b^\dagger H_{\rm out}W_b$ and
$S_b=\sigma_b(\rho_\star)$. Branch diagonality of the output gives
$E_{\rm out}=\sum_b\operatorname{Tr}[H_{bb}(S_b\otimes\tau_b)]$,
which is Eq.~\eqref{eq:boundary-resource-objective}.
This uses the full Hamiltonian, including interface couplings.
For every resource vector $v$,
$v^\dagger G_bv=\operatorname{Tr}[H_{bb}
(S_b\otimes|v\rangle\langle v|)]$ is real, so $G_b$ is Hermitian.
Its expectation in a density operator is a convex combination of
eigenvalues; a minimum is attained by a pure vector in its lowest
eigenspace. In general this direction depends on the reference input.

\label{subsec:b4-factorization-proof}
For the B4 example, label consecutive periodic rail nodes by
$(s_L,u_1,u_2,s_R,v_1,v_2,v_3,v_4)$, returning to $s_L$.
The $u$ nodes are retained and the $v$ nodes belong to the new bridge.
Each node is one of $x,y,c_1$ in
Eq.~\eqref{eq:b4-vertices}; there is a unique complete edge
between any adjacent pair. A forward plaquette changes the middle
node of its two-column neighborhood around $x\to y\to c_1\to x$,
and its adjoint reverses the cycle. It cannot leave that node fixed.
The two plaquettes at $s_L,s_R$ therefore connect different cut
blocks and have zero same-$b$ compression. The remaining magnetic
terms act either on the $u$ nodes or on the $v$ nodes.
All electric terms are diagonal: links in the retained path depend
only on $u_1,u_2$ and fixed cuts, while links in the bridge depend
only on the $v$ nodes and fixed cuts. Assign the latter, including
the fixed-interface rail terms, to $T_b$. This classifies every
term of the full Hamiltonian and gives
Eq.~\eqref{eq:b4-resource-factorization}, without omitting the
cross-cut magnetic operators or an interface electric term.

Substitution yields Eq.~\eqref{eq:b4-reference-objective}.
For $p_b>0$, the positive multiplier and scalar shift do not change
the minimizing direction of $T_b$. If $p_b=0$, positivity implies
$S_b=0$; that branch contributes nothing for this reference but still
has its fixed normalized resource. Subtracting two resource
expectations cancels the internal scalar and proves
Eq.~\eqref{eq:resource-gain-weighted}. Exact branch minima give
nonnegative gains within this family. The numerical rays used in
the figures are approximations, not rigorous eigenvalue enclosures.

\subsection{Energy changes at fixed output spectrum}
\label{subsec:resource-spectrum-proof}

Let $s_{b\alpha}$ and $t_{b\mu}$ be the eigenvalues of $\sigma_b$
and $\tau_b$. Isometry and orthogonal branch ranges give
\begin{equation}
\label{eq:resource-output-spectrum}
 \operatorname{spec}_{\ne0}\mathcal R_\tau(\rho)
 =\bigsqcup_b
   \{s_{b\alpha}t_{b\mu}:s_{b\alpha}t_{b\mu}>0\},
\end{equation}
with multiplicities. Squaring and summing proves
Eq.~\eqref{eq:resource-purity}. For pure resources, the nonzero
spectrum is the union of the spectra of the original $\sigma_b$,
independent of the resource directions. Mixed resources cannot
increase the purity because $\operatorname{Tr}\tau_b^2\leq1$.

\displayappendixsection{Physical domains, exact lengths and the two limits}
\label{app:construction}

\subsection{Fixed-regulator conditions}
\label{subsec:hypotheses}

\phantomsection\label{hyp:H1}\noindent\textbf{(H1)} At fixed regulator, coupling
convention, finite transverse volume, and cutoff, the self-adjoint finite-volume
Hamiltonians are the size, sector, and boundary restrictions of the common
$N$-independent local interaction rule specified above; realization-dependent
terms lie in a fixed-width boundary collar and are uniformly bounded in $N$.

\phantomsection\label{hyp:H2}\noindent\textbf{(H2)} The Gauss-constrained
sectors are Hamiltonian invariant.

\phantomsection\label{hyp:H3}\noindent\textbf{(H3)} At fixed $B$, the local
state space and intertwiner multiplicities are finite, while the Hamiltonian has
finite interaction range and uniformly bounded local terms.

\phantomsection\label{hyp:H4}\noindent\textbf{(H4)} In the strongly connected
component of each $\mathcal G_{B,q}$ that contains $c_q$, every vertex has
directed paths to and from $c_q$, which has a physical self-loop; the only
noncanonical active flux components in the six-cutoff census occur at $B=20/3$,
one in each of the $q=1,2$ sectors, and are treated separately in
Appendix~\ref{appsub:component-domain}.

The complete columns retain both independent intertwiner labels.
Finite directed-graph witnesses for all eighteen $(B,q)$ graphs and the
exceptional-block certificate were used in verifying these finite statements. They supply finite local witnesses, while the arguments below
extend them to the required lengths and energy comparisons.

\subsection{Invariant domains and the exceptional flux block}
\label{subsec:component-proof}
\label{appsub:component-domain}

An open $c_q\to c_q$ path lies in the strongly connected component
containing $c_q$: every visited node is reached along its prefix
and returns along its suffix. A periodic walk likewise lies in one
strongly connected component. The electric Hamiltonian is diagonal.
A local magnetic update fixes its two outer graph nodes, and the
unchanged remainder of the periodic walk returns from the right
node to the left. A new intermediate node is therefore mutually
reachable with these nodes. The updated walk stays in the same
component. For an open walk, its canonical endpoints give the
corresponding return connection. Together with H2, this proves
invariance of the declared domains and the periodic decomposition
into component blocks. Graph edges describe concatenation, not
Hamiltonian matrix elements. Zero-degree candidate vertices support
no paths.

\label{subsec:exception-energy}
The only noncanonical active flux block in this cutoff family occurs
at $B=20/3$. In $q=1$ it repeats the node $(\bar6,6)$ with the unique
self-loop $(V,\gamma_t,\gamma_b)=(1,0,0)$.
Nonconstant magnetic updates would require a local
$(6,8,3)$ or $(8,6,\bar3)$ singlet, or a conjugate, whose Casimir
sum is $23/3>20/3$ and is excluded. Independently, the retained
forward plaquette kernel has no nonzero entry on the uniform two-column
pattern, as specified in the exceptional-block certificate.
Its adjoint vanishes there as well. This zero action is not inferred
from the block being one dimensional.

Each periodic cell then has electric Casimir sum $20/3$.
The straight fundamental trial state in the canonical component
has sum $4/3$ per cell and zero magnetic expectation. In the common
Hamiltonian convention,
\begin{equation}
\label{eq:exception-energy-comparison}
 \begin{aligned}
 & E_{\rm iso}(N)=\frac{10}{3}g^2N,\\
 & E_{0,\rm main}^{\rm flux}(N)\leq\frac23g^2N,\\
 & E_{\rm iso}-E_{0,\rm main}^{\rm flux}\geq\frac83g^2N>0.
\end{aligned}
\end{equation}
Charge conjugation gives the $q=2$ result. The exceptional states
remain in the full spectrum, but cannot minimize it for $g^2>0$.
There is no additional active vacuum component in the stated family.
Consequently the full-sector infima needed below can be approached
within the canonical domains; the maps are not extended to arbitrary
noncanonical inputs.

\subsection{Padding and legality of the four operations}
\label{subsec:exact-length-proof}
\label{subsec:four-operation-proof}

Complete directed witnesses connect every canonical-component node
to and from $c_q$, with maximal distance $d_{B,q}=1$ at $B=4$ and
at most two at the other five cutoffs. Every step carries its
vertical irrep and both singlet labels. The physical loop at $c_q$
has $V=1$ and $\gamma_t=\gamma_b=0$.
For routes $x\to c_q$ and $c_q\to y$ of lengths
$\ell_x,\ell_y$, insert $\ell-\ell_x-\ell_y$ copies of that loop.
For every $\ell\geq\ell_x+\ell_y$ this is a legal path of exactly
$\ell$ columns. In particular, the reserves $r=3,5$ satisfy
$2d_{B,q}<r$, so every bridge of length $r$ or $2r-1$ in
Table~\ref{tab:four-operations} exists. Legs ending at $c_q$
are padded there in the same way.

Table~\ref{tab:four-operations} gives the length accounts.
For periodic gluing, let retained paths run from $y_i$ to $x_i$.
The two bridges go from $x_1$ to $y_2$ and from $x_2$ to $y_1$.
Their cross concatenation makes one ring with the original graph
charge $q$, not two rings or the sum of two charges. Cuts are fixed
before the input is processed and conditioned on jointly.
The decomposition includes only jointly realized cut labels.
If a retained segment is empty at a threshold, its endpoints
coincide and its path space is the single identity witness at that
node. It is not an edge-free connection between distinct nodes.
These conventions preserve the path bijection used in Appendix~\ref{app:channel-energy}
also at equality in the sufficient thresholds.

\subsection{Gluing, stability and existence of densities}
\label{subsec:gluing-variational}
\label{subsec:stability-tail}

Fix $B,s,\alpha$ and all regulator parameters. Choose fixed legal
resources, and take $m,n$ in the gluing tail.
By the domain argument above, normalized states $\rho_m,\rho_n$
in the operation domains can approximate their full-sector
infima within any $\varepsilon>0$. Their product is a variational
choice, not a channel restriction. The sum input Hamiltonian and
the uniform local budget give the legal trial state
$\eta=\mathcal R(\rho_m\otimes\rho_n)$ and
\begin{equation}
\label{eq:gluing-variational-chain}
\begin{split}
 E_0((m+n)a)&\leq\operatorname{Tr}(H_{m+n}\eta)\\
 &\leq E_0(ma)+E_0(na)\\
 &\quad+2\varepsilon+K_{B,q,\mathrm{glue}}.
\end{split}
\end{equation}
The constant is independent of $m,n$. Sending
$\varepsilon\downarrow0$ proves Eq.~\eqref{eq:sector-gluing}.
Also, H1--H3 imply at most $\kappa N+\kappa_\partial$ local terms
with norm at most $h_{\max}$. Thus
\begin{equation}
\label{eq:linear-stability}
 \begin{aligned}
 & H_N\geq-h_{\max}(\kappa N+\kappa_\partial)\mathbf1,\\
 & E_0(Na)\geq-AN-D
\end{aligned}
\end{equation}
for length-independent constants. Sector restriction preserves
this lower bound.

\begin{proof}[Proof of Theorem~\ref{thm:bulkexist}]
For a fixed sector and boundary put $u_N=E_0(Na)$.
Equation~\eqref{eq:sector-gluing} gives
$u_{m+n}\leq u_m+u_n+C$ for $m,n\geq n_\ast$.
Set $v_n=u_n+C$. This is subadditive on the same tail.
To apply the subadditive-sequence argument~\cite{Fekete1923}
without using inadmissible short pieces, fix $k\geq n_\ast$ and write
\begin{equation}
\label{eq:tail-remainder}
 n=\ell k+r_0=(\ell-1)k+(k+r_0),\qquad 0\leq r_0<k.
\end{equation}
For large $n$, every block is in the tail, including the final one.
Repeated subadditivity gives
$v_n\leq(\ell-1)v_k+v_{k+r_0}$. At fixed $k$ the last term has only
finitely many possible values, and $(\ell-1)/n\to1/k$.
Therefore $\limsup v_n/n\leq v_k/k$ for every $k\geq n_\ast$.
With $I=\inf_{k\geq n_\ast}v_k/k$,
\begin{equation}
\label{eq:tail-limit}
 I\leq\liminf_{n\to\infty}\frac{v_n}{n}
 \leq\limsup_{n\to\infty}\frac{v_n}{n}\leq I.
\end{equation}
The linear lower bound makes $I>-\infty$, and a fixed finite tail
ratio bounds it above. Since $C/n\to0$, $u_n/n$ has the same finite
limit. The physical path spaces are nonempty on this tail by padding.
Division by fixed $a>0$ gives $e_B^{(s,\alpha)}$.
Apply the argument separately to vacuum and flux at each boundary;
only then subtract their convergent density sequences to obtain
$\sigma_B^{(\alpha)}$.
\end{proof}

\subsection{Two boundary comparisons at the same length}
\label{subsec:boundary-proof}

For $N\geq2r-1$, Table~\ref{tab:four-operations} and the channel
proof give both boundary maps at the same $Na$. They need not be
inverses. Take an open $\varepsilon$-minimizer and, independently,
a periodic one in the canonical domain. The latter approximates
the full-sector infimum by the component and exceptional-block
arguments above. Variational minimization of the two outputs gives
\begin{equation}
\label{eq:two-way-variational-proof}
\begin{aligned}
 E_0^{\rm p}(Na)&\leq E_0^{\rm o}(Na)+\varepsilon
                         +K_{\rm o\to p},\\
 E_0^{\rm o}(Na)&\leq E_0^{\rm p}(Na)+\varepsilon
                         +K_{\rm p\to o}.
\end{aligned}
\end{equation}
Both $K$'s are finite and length independent by
Appendix~\ref{subsec:energy-proof}. Empty retained segments
and overlapping boundary neighborhoods obey the same bounds.
Sending $\varepsilon\downarrow0$ establishes the two independent
energy comparisons; neither reverses the other state transformation.

\begin{proof}[Proof of Theorem~\ref{thm:boundary}]
For each $s$, the comparisons bound
$d_s^B(N)=E_{0,B}^{(s,\mathrm p)}(Na)-E_{0,B}^{(s,\mathrm o)}(Na)$
between $-K_{\rm p\to o,B}^{(s)}$ and $K_{\rm o\to p,B}^{(s)}$.
Their maximum is a valid $C_{\partial,B}^{(s)}$.
The sector density limits already exist by
Theorem~\ref{thm:bulkexist}. Divide the absolute bound by $Na$
and take $N\to\infty$ to equate the two boundary limits for each
sector. Subtracting the vacuum limit from the flux limit proves
$\sigma_B^{\mathrm p}=\sigma_B^{\mathrm o}$.
\end{proof}

At finite $N$, the triangle inequality for
$d_{\rm flux}^B-d_{\rm vac}^B$ gives
Eq.~\eqref{eq:boundary-density-budget}. It controls the
disagreement between boundaries, not the convergence rate of either
density to its own limit.

\end{appendices}

\section*{Supplementary Information}
The Supplementary Information appended to this preprint provides numerical
conventions and qualifications, input and coherence controls, fixed resource
selection, and matched spectra.

\section*{Statements and Declarations}
\subsection*{Funding}
No funding was received for conducting this study.

\subsection*{Competing interests}
The author has no relevant financial or non-financial interests to disclose.

\subsection*{Data availability}
The data that support the findings of this study are available from the
corresponding author upon reasonable request.



\begin{thebibliography}{126}
\ifx \bisbn   \undefined \def \bisbn  #1{ISBN #1}\fi
\ifx \binits  \undefined \def \binits#1{#1}\fi
\ifx \bauthor  \undefined \def \bauthor#1{#1}\fi
\ifx \batitle  \undefined \def \batitle#1{#1}\fi
\ifx \bjtitle  \undefined \def \bjtitle#1{#1}\fi
\ifx \bvolume  \undefined \def \bvolume#1{\textbf{#1}}\fi
\ifx \byear  \undefined \def \byear#1{#1}\fi
\ifx \bissue  \undefined \def \bissue#1{#1}\fi
\ifx \bfpage  \undefined \def \bfpage#1{#1}\fi
\ifx \blpage  \undefined \def \blpage #1{#1}\fi
\ifx \burl  \undefined \def \burl#1{\textsf{#1}}\fi
\ifx \doiurl  \undefined \def \doiurl#1{\url{https://doi.org/#1}}\fi
\ifx \betal  \undefined \def \betal{\textit{et al.}}\fi
\ifx \binstitute  \undefined \def \binstitute#1{#1}\fi
\ifx \binstitutionaled  \undefined \def \binstitutionaled#1{#1}\fi
\ifx \bctitle  \undefined \def \bctitle#1{#1}\fi
\ifx \beditor  \undefined \def \beditor#1{#1}\fi
\ifx \bpublisher  \undefined \def \bpublisher#1{#1}\fi
\ifx \bbtitle  \undefined \def \bbtitle#1{#1}\fi
\ifx \bedition  \undefined \def \bedition#1{#1}\fi
\ifx \bseriesno  \undefined \def \bseriesno#1{#1}\fi
\ifx \blocation  \undefined \def \blocation#1{#1}\fi
\ifx \bsertitle  \undefined \def \bsertitle#1{#1}\fi
\ifx \bsnm \undefined \def \bsnm#1{#1}\fi
\ifx \bsuffix \undefined \def \bsuffix#1{#1}\fi
\ifx \bparticle \undefined \def \bparticle#1{#1}\fi
\ifx \barticle \undefined \def \barticle#1{#1}\fi
\bibcommenthead
\ifx \bconfdate \undefined \def \bconfdate #1{#1}\fi
\ifx \botherref \undefined \def \botherref #1{#1}\fi
\ifx \url \undefined \def \url#1{\textsf{#1}}\fi
\ifx \bchapter \undefined \def \bchapter#1{#1}\fi
\ifx \bbook \undefined \def \bbook#1{#1}\fi
\ifx \bcomment \undefined \def \bcomment#1{#1}\fi
\ifx \oauthor \undefined \def \oauthor#1{#1}\fi
\ifx \citeauthoryear \undefined \def \citeauthoryear#1{#1}\fi
\ifx \endbibitem  \undefined \def \endbibitem {}\fi
\ifx \bconflocation  \undefined \def \bconflocation#1{#1}\fi
\ifx \arxivurl  \undefined \def \arxivurl#1{\textsf{#1}}\fi
\csname PreBibitemsHook\endcsname

\bibitem[\protect\citeauthoryear{{'t Hooft}}{1979}]{tHooft1979}
\begin{barticle}
\bauthor{\bsnm{{'t Hooft}}, \binits{G.}}:
\batitle{{A property of electric and magnetic flux in non-Abelian gauge
  theories}}.
\bjtitle{Nucl. Phys. B}
\bvolume{153},
\bfpage{141}--\blpage{160}
(\byear{1979})
\doiurl{10.1016/0550-3213(79)90595-9}
\end{barticle}
\endbibitem

\bibitem[\protect\citeauthoryear{{Yaffe}}{1980}]{Yaffe1980ConfinementSUN}
\begin{barticle}
\bauthor{\bsnm{{Yaffe}}, \binits{L.G.}}:
\batitle{{Confinement in SU(N) lattice gauge theories}}.
\bjtitle{Phys. Rev. D}
\bvolume{21},
\bfpage{1574}--\blpage{1590}
(\byear{1980})
\doiurl{10.1103/PhysRevD.21.1574}
\end{barticle}
\endbibitem

\bibitem[\protect\citeauthoryear{{Kogut}
  et~al.}{1981}]{KogutPearsonShigemitsu1981}
\begin{barticle}
\bauthor{\bsnm{{Kogut}}, \binits{J.B.}},
\bauthor{\bsnm{{Pearson}}, \binits{R.P.}},
\bauthor{\bsnm{{Shigemitsu}}, \binits{J.}}:
\batitle{{The String Tension, Confinement and Roughening in SU(3) Hamiltonian
  Lattice Gauge Theory}}.
\bjtitle{Phys. Lett. B}
\bvolume{98},
\bfpage{63}--\blpage{68}
(\byear{1981})
\doiurl{10.1016/0370-2693(81)90369-5}
\end{barticle}
\endbibitem

\bibitem[\protect\citeauthoryear{{Kogut}
  et~al.}{1979}]{KogutPearsonShigemitsu1979Beta}
\begin{barticle}
\bauthor{\bsnm{{Kogut}}, \binits{J.B.}},
\bauthor{\bsnm{{Pearson}}, \binits{R.B.}},
\bauthor{\bsnm{{Shigemitsu}}, \binits{J.}}:
\batitle{{Quantum-Chromodynamic $\beta$ Function at Intermediate and Strong
  Coupling}}.
\bjtitle{Phys. Rev. Lett.}
\bvolume{43},
\bfpage{484}--\blpage{486}
(\byear{1979})
\doiurl{10.1103/PhysRevLett.43.484}
\end{barticle}
\endbibitem

\bibitem[\protect\citeauthoryear{{Gaiotto}
  et~al.}{2015}]{GaiottoEtAl2015GeneralizedSymmetries}
\begin{barticle}
\bauthor{\bsnm{{Gaiotto}}, \binits{D.}},
\bauthor{\bsnm{{Kapustin}}, \binits{A.}},
\bauthor{\bsnm{{Seiberg}}, \binits{N.}},
\bauthor{\bsnm{{Willett}}, \binits{B.}}:
\batitle{{Generalized Global Symmetries}}.
\bjtitle{J. High Energy Phys.}
\bvolume{2015}(\bissue{02}),
\bfpage{172}
(\byear{2015})
\doiurl{10.1007/JHEP02(2015)172}
\end{barticle}
\endbibitem

\bibitem[\protect\citeauthoryear{{Hsin}
  et~al.}{2019}]{HsinLamSeiberg2019OneForm}
\begin{barticle}
\bauthor{\bsnm{{Hsin}}, \binits{P.-S.}},
\bauthor{\bsnm{{Lam}}, \binits{H.T.}},
\bauthor{\bsnm{{Seiberg}}, \binits{N.}}:
\batitle{{Comments on One-Form Global Symmetries and Their Gauging in 3d and
  4d}}.
\bjtitle{SciPost Phys.}
\bvolume{6},
\bfpage{039}
(\byear{2019})
\doiurl{10.21468/SciPostPhys.6.3.039}
\end{barticle}
\endbibitem

\bibitem[\protect\citeauthoryear{{Nguyen} et~al.}{2025}]{AddII04}
\begin{barticle}
\bauthor{\bsnm{{Nguyen}}, \binits{M.}},
\bauthor{\bsnm{{Sulejmanpasic}}, \binits{T.}},
\bauthor{\bsnm{{{\"{U}}nsal}}, \binits{M.}}:
\batitle{{Phases of Theories with $\mathbb{Z}_N$ 1-Form Symmetry, and the Roles
  of Center Vortices and Magnetic Monopoles}}.
\bjtitle{Phys. Rev. Lett.}
\bvolume{134},
\bfpage{141902}
(\byear{2025})
\doiurl{10.1103/PhysRevLett.134.141902}
\end{barticle}
\endbibitem

\bibitem[\protect\citeauthoryear{{Mathur} and {Rathor}}{2023}]{AddVII03}
\begin{barticle}
\bauthor{\bsnm{{Mathur}}, \binits{M.}},
\bauthor{\bsnm{{Rathor}}, \binits{A.}}:
\batitle{{Disorder operators and magnetic vortices in SU(N) lattice gauge
  theory}}.
\bjtitle{Phys. Rev. D}
\bvolume{108},
\bfpage{114507}
(\byear{2023})
\doiurl{10.1103/PhysRevD.108.114507}
\end{barticle}
\endbibitem

\bibitem[\protect\citeauthoryear{{de Forcrand} and {von
  Smekal}}{2002}]{AddII20}
\begin{barticle}
\bauthor{\bsnm{{de Forcrand}}, \binits{P.}},
\bauthor{\bsnm{{von Smekal}}, \binits{L.}}:
\batitle{{'t Hooft loops, electric flux sectors, and confinement in SU(2)
  Yang-Mills theory}}.
\bjtitle{Phys. Rev. D}
\bvolume{66},
\bfpage{011504}
(\byear{2002})
\doiurl{10.1103/PhysRevD.66.011504}
\end{barticle}
\endbibitem

\bibitem[\protect\citeauthoryear{{Kogut}
  et~al.}{1976}]{KogutSinclairSusskind1976LowEnergyQCD}
\begin{barticle}
\bauthor{\bsnm{{Kogut}}, \binits{J.}},
\bauthor{\bsnm{{Sinclair}}, \binits{D.K.}},
\bauthor{\bsnm{{Susskind}}, \binits{L.}}:
\batitle{{A quantitative approach to low-energy quantum chromodynamics}}.
\bjtitle{Nucl. Phys. B}
\bvolume{114},
\bfpage{199}--\blpage{236}
(\byear{1976})
\doiurl{10.1016/0550-3213(76)90586-1}
\end{barticle}
\endbibitem

\bibitem[\protect\citeauthoryear{{Byrnes} et~al.}{2004}]{AddVI28}
\begin{barticle}
\bauthor{\bsnm{{Byrnes}}, \binits{T.M.R.}},
\bauthor{\bsnm{{Loan}}, \binits{M.}},
\bauthor{\bsnm{{Hamer}}, \binits{C.J.}},
\bauthor{\bsnm{{Bonnet}}, \binits{F.D.R.}},
\bauthor{\bsnm{{Leinweber}}, \binits{D.B.}},
\bauthor{\bsnm{{Williams}}, \binits{A.G.}},
\bauthor{\bsnm{{Zanotti}}, \binits{J.M.}}:
\batitle{{Hamiltonian limit of (3+1)-dimensional SU(3) lattice gauge theory on
  anisotropic lattices}}.
\bjtitle{Phys. Rev. D}
\bvolume{69},
\bfpage{074509}
(\byear{2004})
\doiurl{10.1103/PhysRevD.69.074509}
\end{barticle}
\endbibitem

\bibitem[\protect\citeauthoryear{{Athenodorou} and {Teper}}{2021}]{AddVI09}
\begin{barticle}
\bauthor{\bsnm{{Athenodorou}}, \binits{A.}},
\bauthor{\bsnm{{Teper}}, \binits{M.}}:
\batitle{{SU(N) gauge theories in 3+1 dimensions: glueball spectrum, string
  tensions and topology}}.
\bjtitle{J. High Energy Phys.}
\bvolume{2021}(\bissue{12}),
\bfpage{082}
(\byear{2021})
\doiurl{10.1007/JHEP12(2021)082}
\end{barticle}
\endbibitem

\bibitem[\protect\citeauthoryear{{Athenodorou} and {Teper}}{2017}]{AddVI16}
\begin{barticle}
\bauthor{\bsnm{{Athenodorou}}, \binits{A.}},
\bauthor{\bsnm{{Teper}}, \binits{M.}}:
\batitle{{SU(N) gauge theories in 2+1 dimensions: glueball spectra and k-string
  tensions}}.
\bjtitle{J. High Energy Phys.}
\bvolume{2017}(\bissue{02}),
\bfpage{015}
(\byear{2017})
\doiurl{10.1007/JHEP02(2017)015}
\end{barticle}
\endbibitem

\bibitem[\protect\citeauthoryear{{Lucini} et~al.}{2004}]{AddVI36}
\begin{barticle}
\bauthor{\bsnm{{Lucini}}, \binits{B.}},
\bauthor{\bsnm{{Teper}}, \binits{M.}},
\bauthor{\bsnm{{Wenger}}, \binits{U.}}:
\batitle{{Glueballs and k-strings in SU(N) gauge theories: Calculations with
  improved operators}}.
\bjtitle{J. High Energy Phys.}
\bvolume{2004}(\bissue{06}),
\bfpage{012}
(\byear{2004})
\doiurl{10.1088/1126-6708/2004/06/012}
\end{barticle}
\endbibitem

\bibitem[\protect\citeauthoryear{{Hasenfratz} et~al.}{1990}]{AddII09}
\begin{barticle}
\bauthor{\bsnm{{Hasenfratz}}, \binits{A.}},
\bauthor{\bsnm{{Hasenfratz}}, \binits{P.}},
\bauthor{\bsnm{{Niedermayer}}, \binits{F.}}:
\batitle{{Electric fluxes and twisted free energies in SU(3)}}.
\bjtitle{Nucl. Phys. B}
\bvolume{329},
\bfpage{739}--\blpage{752}
(\byear{1990})
\doiurl{10.1016/0550-3213(90)90080-W}
\end{barticle}
\endbibitem

\bibitem[\protect\citeauthoryear{{Koller} and {van
  Baal}}{1986}]{KollerVanBaal1986}
\begin{barticle}
\bauthor{\bsnm{{Koller}}, \binits{J.}},
\bauthor{\bsnm{{van Baal}}, \binits{P.}}:
\batitle{{A Rigorous Nonperturbative Result for the Glueball Mass and Electric
  Flux Energy in a Finite Volume}}.
\bjtitle{Nucl. Phys. B}
\bvolume{273},
\bfpage{387}--\blpage{412}
(\byear{1986})
\doiurl{10.1016/0550-3213(86)90252-X}
\end{barticle}
\endbibitem

\bibitem[\protect\citeauthoryear{{L{\"{u}}scher}}{1983}]{Luscher1983}
\begin{barticle}
\bauthor{\bsnm{{L{\"{u}}scher}}, \binits{M.}}:
\batitle{{Some Analytic Results Concerning the Mass Spectrum of Yang--Mills
  Gauge Theories on a Torus}}.
\bjtitle{Nucl. Phys. B}
\bvolume{219},
\bfpage{233}--\blpage{261}
(\byear{1983})
\doiurl{10.1016/0550-3213(83)90436-4}
\end{barticle}
\endbibitem

\bibitem[\protect\citeauthoryear{{Garc{\'{i}}a P{\'{e}}rez}
  et~al.}{2018}]{GarciaPerezEtAl2018}
\begin{barticle}
\bauthor{\bsnm{{Garc{\'{i}}a P{\'{e}}rez}}, \binits{M.}},
\bauthor{\bsnm{{Gonz{\'{a}}lez-Arroyo}}, \binits{A.}},
\bauthor{\bsnm{{Koren}}, \binits{M.}},
\bauthor{\bsnm{{Okawa}}, \binits{M.}}:
\batitle{{The spectrum of 2+1 dimensional Yang--Mills theory on a twisted
  spatial torus}}.
\bjtitle{J. High Energy Phys.}
\bvolume{2018}(\bissue{07}),
\bfpage{169}
(\byear{2018})
\doiurl{10.1007/JHEP07(2018)169}
\end{barticle}
\endbibitem

\bibitem[\protect\citeauthoryear{{van Baal} and {Koller}}{1986}]{AddII21}
\begin{barticle}
\bauthor{\bsnm{{van Baal}}, \binits{P.}},
\bauthor{\bsnm{{Koller}}, \binits{J.}}:
\batitle{{Finite-Size Results for SU(3) Gauge Theory}}.
\bjtitle{Phys. Rev. Lett.}
\bvolume{57},
\bfpage{2783}--\blpage{2786}
(\byear{1986})
\doiurl{10.1103/PhysRevLett.57.2783}
\end{barticle}
\endbibitem

\bibitem[\protect\citeauthoryear{{van Baal}}{1982}]{VanBaal1982Hypertorus}
\begin{barticle}
\bauthor{\bsnm{{van Baal}}, \binits{P.}}:
\batitle{{Some Results for SU(N) Gauge-Fields on the Hypertorus}}.
\bjtitle{Commun. Math. Phys.}
\bvolume{85},
\bfpage{529}--\blpage{547}
(\byear{1982})
\doiurl{10.1007/BF01403503}
\end{barticle}
\endbibitem

\bibitem[\protect\citeauthoryear{{van Baal} and {Koller}}{1987}]{AddII11}
\begin{barticle}
\bauthor{\bsnm{{van Baal}}, \binits{P.}},
\bauthor{\bsnm{{Koller}}, \binits{J.}}:
\batitle{{QCD on a torus, and electric flux energies from tunneling}}.
\bjtitle{Ann. Phys.}
\bvolume{174},
\bfpage{299}--\blpage{371}
(\byear{1987})
\doiurl{10.1016/0003-4916(87)90032-7}
\end{barticle}
\endbibitem

\bibitem[\protect\citeauthoryear{{Meyer}}{2005}]{AddII22}
\begin{barticle}
\bauthor{\bsnm{{Meyer}}, \binits{H.B.}}:
\batitle{{The spectrum of SU(N) gauge theories in finite volume}}.
\bjtitle{J. High Energy Phys.}
\bvolume{2005}(\bissue{03}),
\bfpage{064}
(\byear{2005})
\doiurl{10.1088/1126-6708/2005/03/064}
\end{barticle}
\endbibitem

\bibitem[\protect\citeauthoryear{{Garc{\'{i}}a P{\'{e}}rez}
  et~al.}{2013}]{AddII05}
\begin{barticle}
\bauthor{\bsnm{{Garc{\'{i}}a P{\'{e}}rez}}, \binits{M.}},
\bauthor{\bsnm{{Gonz{\'{a}}lez-Arroyo}}, \binits{A.}},
\bauthor{\bsnm{{Okawa}}, \binits{M.}}:
\batitle{{Spatial volume dependence for 2+1 dimensional SU(N) Yang-Mills
  theory}}.
\bjtitle{J. High Energy Phys.}
\bvolume{2013}(\bissue{09}),
\bfpage{003}
(\byear{2013})
\doiurl{10.1007/JHEP09(2013)003}
\end{barticle}
\endbibitem

\bibitem[\protect\citeauthoryear{{Bergner} et~al.}{2025}]{AddII15}
\begin{barticle}
\bauthor{\bsnm{{Bergner}}, \binits{G.}},
\bauthor{\bsnm{{Gonz{\'{a}}lez-Arroyo}}, \binits{A.}},
\bauthor{\bsnm{{Soler}}, \binits{I.}}:
\batitle{{A $T^2\times R^2$ roadmap to confinement in SU(2) Yang-Mills
  theory}}.
\bjtitle{J. High Energy Phys.}
\bvolume{2025}(\bissue{10}),
\bfpage{087}
(\byear{2025})
\doiurl{10.1007/JHEP10(2025)087}
\end{barticle}
\endbibitem

\bibitem[\protect\citeauthoryear{{Gonz{\'{a}}lez-Arroyo} and
  {Mart{\'{i}}nez}}{1996}]{AddVI18}
\begin{barticle}
\bauthor{\bsnm{{Gonz{\'{a}}lez-Arroyo}}, \binits{A.}},
\bauthor{\bsnm{{Mart{\'{i}}nez}}, \binits{P.}}:
\batitle{{Investigating Yang-Mills theory and confinement as a function of the
  spatial volume}}.
\bjtitle{Nucl. Phys. B}
\bvolume{459},
\bfpage{337}--\blpage{352}
(\byear{1996})
\doiurl{10.1016/0550-3213(95)00601-X}
\end{barticle}
\endbibitem

\bibitem[\protect\citeauthoryear{{Delcamp}
  et~al.}{2016}]{DelcampDittrichRiello2016NonAbelianEntanglement}
\begin{barticle}
\bauthor{\bsnm{{Delcamp}}, \binits{C.}},
\bauthor{\bsnm{{Dittrich}}, \binits{B.}},
\bauthor{\bsnm{{Riello}}, \binits{A.}}:
\batitle{{On entanglement entropy in non-Abelian lattice gauge theory and 3D
  quantum gravity}}.
\bjtitle{J. High Energy Phys.}
\bvolume{2016}(\bissue{11}),
\bfpage{102}
(\byear{2016})
\doiurl{10.1007/JHEP11(2016)102}
\end{barticle}
\endbibitem

\bibitem[\protect\citeauthoryear{{Soni} and
  {Trivedi}}{2016}]{SoniTrivedi2016GaugeEntanglement}
\begin{barticle}
\bauthor{\bsnm{{Soni}}, \binits{R.M.}},
\bauthor{\bsnm{{Trivedi}}, \binits{S.P.}}:
\batitle{{Aspects of entanglement entropy for gauge theories}}.
\bjtitle{J. High Energy Phys.}
\bvolume{2016}(\bissue{01}),
\bfpage{136}
(\byear{2016})
\doiurl{10.1007/JHEP01(2016)136}
\end{barticle}
\endbibitem

\bibitem[\protect\citeauthoryear{{Donnelly}}{2014}]{Donnelly2014NonAbelianEntanglement}
\begin{barticle}
\bauthor{\bsnm{{Donnelly}}, \binits{W.}}:
\batitle{{Entanglement entropy and nonabelian gauge symmetry}}.
\bjtitle{Classical Quantum Gravity}
\bvolume{31},
\bfpage{214003}
(\byear{2014})
\doiurl{10.1088/0264-9381/31/21/214003}
\end{barticle}
\endbibitem

\bibitem[\protect\citeauthoryear{{Aoki}
  et~al.}{2015}]{AokiEtAl2015GaugeEntanglementDefinition}
\begin{barticle}
\bauthor{\bsnm{{Aoki}}, \binits{S.}},
\bauthor{\bsnm{{Iritani}}, \binits{T.}},
\bauthor{\bsnm{{Nozaki}}, \binits{M.}},
\bauthor{\bsnm{{Numasawa}}, \binits{T.}},
\bauthor{\bsnm{{Shiba}}, \binits{N.}},
\bauthor{\bsnm{{Tasaki}}, \binits{H.}}:
\batitle{{On the definition of entanglement entropy in lattice gauge
  theories}}.
\bjtitle{J. High Energy Phys.}
\bvolume{2015}(\bissue{06}),
\bfpage{187}
(\byear{2015})
\doiurl{10.1007/JHEP06(2015)187}
\end{barticle}
\endbibitem

\bibitem[\protect\citeauthoryear{{Buividovich} and
  {Polikarpov}}{2008}]{BuividovichPolikarpov2008GaugeEntanglement}
\begin{barticle}
\bauthor{\bsnm{{Buividovich}}, \binits{P.V.}},
\bauthor{\bsnm{{Polikarpov}}, \binits{M.I.}}:
\batitle{{Entanglement entropy in gauge theories and the holographic principle
  for electric strings}}.
\bjtitle{Phys. Lett. B}
\bvolume{670},
\bfpage{141}--\blpage{145}
(\byear{2008})
\doiurl{10.1016/j.physletb.2008.10.032}
\end{barticle}
\endbibitem

\bibitem[\protect\citeauthoryear{{Lin} and
  {Radi{\v{c}}evi{\'{c}}}}{2020}]{LinRadicevic2020Entanglement}
\begin{barticle}
\bauthor{\bsnm{{Lin}}, \binits{J.}},
\bauthor{\bsnm{{Radi{\v{c}}evi{\'{c}}}}, \binits{{\DJ}.}}:
\batitle{{Comments on defining entanglement entropy}}.
\bjtitle{Nucl. Phys. B}
\bvolume{958},
\bfpage{115118}
(\byear{2020})
\doiurl{10.1016/j.nuclphysb.2020.115118}
\end{barticle}
\endbibitem

\bibitem[\protect\citeauthoryear{{Casini}
  et~al.}{2014}]{CasiniHuertaRosabal2014}
\begin{barticle}
\bauthor{\bsnm{{Casini}}, \binits{H.}},
\bauthor{\bsnm{{Huerta}}, \binits{M.}},
\bauthor{\bsnm{{Rosabal}}, \binits{J.A.}}:
\batitle{{Remarks on entanglement entropy for gauge fields}}.
\bjtitle{Phys. Rev. D}
\bvolume{89},
\bfpage{085012}
(\byear{2014})
\doiurl{10.1103/PhysRevD.89.085012}
\end{barticle}
\endbibitem

\bibitem[\protect\citeauthoryear{{Donnelly}}{2012}]{Donnelly2012BoundaryRepresentations}
\begin{barticle}
\bauthor{\bsnm{{Donnelly}}, \binits{W.}}:
\batitle{{Decomposition of entanglement entropy in lattice gauge theory}}.
\bjtitle{Phys. Rev. D}
\bvolume{85},
\bfpage{085004}
(\byear{2012})
\doiurl{10.1103/PhysRevD.85.085004}
\end{barticle}
\endbibitem

\bibitem[\protect\citeauthoryear{{Hategan-Marandiuc}}{2024}]{HateganMarandiuc2024NonAbelianEntanglement}
\begin{barticle}
\bauthor{\bsnm{{Hategan-Marandiuc}}, \binits{M.}}:
\batitle{{Entanglement entropy in lattices with non-Abelian gauge groups}}.
\bjtitle{Phys. Rev. D}
\bvolume{109},
\bfpage{094501}
(\byear{2024})
\doiurl{10.1103/PhysRevD.109.094501}
\end{barticle}
\endbibitem

\bibitem[\protect\citeauthoryear{{Donnelly} and
  {Freidel}}{2016}]{DonnellyFreidel2016LocalSubsystems}
\begin{barticle}
\bauthor{\bsnm{{Donnelly}}, \binits{W.}},
\bauthor{\bsnm{{Freidel}}, \binits{L.}}:
\batitle{{Local subsystems in gauge theory and gravity}}.
\bjtitle{J. High Energy Phys.}
\bvolume{2016}(\bissue{09}),
\bfpage{102}
(\byear{2016})
\doiurl{10.1007/JHEP09(2016)102}
\end{barticle}
\endbibitem

\bibitem[\protect\citeauthoryear{{Gomes} and
  {Riello}}{2021}]{GomesRiello2021QuasilocalYM}
\begin{barticle}
\bauthor{\bsnm{{Gomes}}, \binits{H.}},
\bauthor{\bsnm{{Riello}}, \binits{A.}}:
\batitle{{The quasilocal degrees of freedom of Yang-Mills theory}}.
\bjtitle{SciPost Phys.}
\bvolume{10},
\bfpage{130}
(\byear{2021})
\doiurl{10.21468/SciPostPhys.10.6.130}
\end{barticle}
\endbibitem

\bibitem[\protect\citeauthoryear{{Riello} and
  {Schiavina}}{2024}]{RielloSchiavina2024FluxSuperselection}
\begin{barticle}
\bauthor{\bsnm{{Riello}}, \binits{A.}},
\bauthor{\bsnm{{Schiavina}}, \binits{M.}}:
\batitle{{Hamiltonian gauge theory with corners: constraint reduction and flux
  superselection}}.
\bjtitle{Adv. Theor. Math. Phys.}
\bvolume{28},
\bfpage{1241}--\blpage{1424}
(\byear{2024})
\doiurl{10.4310/ATMP.241029014101}
\end{barticle}
\endbibitem

\bibitem[\protect\citeauthoryear{{Riello}}{2021}]{Riello2021YMFluxSuperselection}
\begin{barticle}
\bauthor{\bsnm{{Riello}}, \binits{A.}}:
\batitle{{Symplectic reduction of Yang-Mills theory with boundaries: from
  superselection sectors to edge modes, and back}}.
\bjtitle{SciPost Phys.}
\bvolume{10},
\bfpage{125}
(\byear{2021})
\doiurl{10.21468/SciPostPhys.10.6.125}
\end{barticle}
\endbibitem

\bibitem[\protect\citeauthoryear{{Gomes}
  et~al.}{2019}]{GomesHopfmullerRiello2019BoundaryCharges}
\begin{barticle}
\bauthor{\bsnm{{Gomes}}, \binits{H.}},
\bauthor{\bsnm{{Hopfm{\"{u}}ller}}, \binits{F.}},
\bauthor{\bsnm{{Riello}}, \binits{A.}}:
\batitle{{A unified geometric framework for boundary charges and dressings:
  Non-Abelian theory and matter}}.
\bjtitle{Nucl. Phys. B}
\bvolume{941},
\bfpage{249}--\blpage{315}
(\byear{2019})
\doiurl{10.1016/j.nuclphysb.2019.02.020}
\end{barticle}
\endbibitem

\bibitem[\protect\citeauthoryear{{Ball} and
  {Ciambelli}}{2026}]{BallCiambelli2026DynamicalEdgeModes}
\begin{barticle}
\bauthor{\bsnm{{Ball}}, \binits{A.}},
\bauthor{\bsnm{{Ciambelli}}, \binits{L.}}:
\batitle{{Dynamical edge modes in Yang-Mills theory}}.
\bjtitle{SciPost Phys.}
\bvolume{20},
\bfpage{013}
(\byear{2026})
\doiurl{10.21468/SciPostPhys.20.1.013}
\end{barticle}
\endbibitem

\bibitem[\protect\citeauthoryear{{Feldman}
  et~al.}{2024}]{FeldmanEtAl2024SuperselectionEntanglement}
\begin{barticle}
\bauthor{\bsnm{{Feldman}}, \binits{N.}},
\bauthor{\bsnm{{Knaute}}, \binits{J.}},
\bauthor{\bsnm{{Zohar}}, \binits{E.}},
\bauthor{\bsnm{{Goldstein}}, \binits{M.}}:
\batitle{{Superselection-resolved entanglement in lattice gauge theories: a
  tensor network approach}}.
\bjtitle{J. High Energy Phys.}
\bvolume{2024}(\bissue{05}),
\bfpage{083}
(\byear{2024})
\doiurl{10.1007/JHEP05(2024)083}
\end{barticle}
\endbibitem

\bibitem[\protect\citeauthoryear{{Van Acoleyen}
  et~al.}{2016}]{VanAcoleyenEtAl2016DistillationGauge}
\begin{barticle}
\bauthor{\bsnm{{Van Acoleyen}}, \binits{K.}},
\bauthor{\bsnm{{Bultinck}}, \binits{N.}},
\bauthor{\bsnm{{Haegeman}}, \binits{J.}},
\bauthor{\bsnm{{Marien}}, \binits{M.}},
\bauthor{\bsnm{{Scholz}}, \binits{V.B.}},
\bauthor{\bsnm{{Verstraete}}, \binits{F.}}:
\batitle{{Entanglement of Distillation for Lattice Gauge Theories}}.
\bjtitle{Phys. Rev. Lett.}
\bvolume{117},
\bfpage{131602}
(\byear{2016})
\doiurl{10.1103/PhysRevLett.117.131602}
\end{barticle}
\endbibitem

\bibitem[\protect\citeauthoryear{{Geiller} and
  {Jai-akson}}{2020}]{GeillerJaiakson2020EdgeModes}
\begin{barticle}
\bauthor{\bsnm{{Geiller}}, \binits{M.}},
\bauthor{\bsnm{{Jai-akson}}, \binits{P.}}:
\batitle{{Extended actions, dynamics of edge modes, and entanglement entropy}}.
\bjtitle{J. High Energy Phys.}
\bvolume{2020}(\bissue{09}),
\bfpage{134}
(\byear{2020})
\doiurl{10.1007/JHEP09(2020)134}
\end{barticle}
\endbibitem

\bibitem[\protect\citeauthoryear{{Aharony}
  et~al.}{2013}]{AharonySeibergTachikawa2013}
\begingroup
\widowpenalty=10000
\begin{barticle}
\bauthor{\bsnm{{Aharony}}, \binits{O.}},
\bauthor{\bsnm{{Seiberg}}, \binits{N.}},
\bauthor{\bsnm{{Tachikawa}}, \binits{Y.}}:
\batitle{{Reading between the lines of four-dimensional gauge theories}}.
\bjtitle{J. High Energy Phys.}
\bvolume{2013}(\bissue{08}),
\bfpage{115}
(\byear{2013})
\doiurl{10.1007/JHEP08(2013)115}
\end{barticle}
\endbibitem
\par
\endgroup

\bibitem[\protect\citeauthoryear{{Kogut} and
  {Susskind}}{1975}]{KogutSusskind1975}
\begin{barticle}
\bauthor{\bsnm{{Kogut}}, \binits{J.}},
\bauthor{\bsnm{{Susskind}}, \binits{L.}}:
\batitle{{Hamiltonian formulation of Wilson's lattice gauge theories}}.
\bjtitle{Phys. Rev. D}
\bvolume{11},
\bfpage{395}--\blpage{408}
(\byear{1975})
\doiurl{10.1103/PhysRevD.11.395}
\end{barticle}
\endbibitem

\bibitem[\protect\citeauthoryear{{Ciavarella}
  et~al.}{2021}]{CiavarellaKlcoSavage2021}
\begin{barticle}
\bauthor{\bsnm{{Ciavarella}}, \binits{A.}},
\bauthor{\bsnm{{Klco}}, \binits{N.}},
\bauthor{\bsnm{{Savage}}, \binits{M.J.}}:
\batitle{{Trailhead for Quantum Simulation of SU(3) Yang--Mills Lattice Gauge
  Theory in the Local Multiplet Basis}}.
\bjtitle{Phys. Rev. D}
\bvolume{103},
\bfpage{094501}
(\byear{2021})
\doiurl{10.1103/PhysRevD.103.094501}
\end{barticle}
\endbibitem

\bibitem[\protect\citeauthoryear{{Burgio}
  et~al.}{2000}]{BurgioEtAl2000PhysicalHilbert}
\begin{barticle}
\bauthor{\bsnm{{Burgio}}, \binits{G.}},
\bauthor{\bsnm{{De Pietri}}, \binits{R.}},
\bauthor{\bsnm{{Morales-T{\'{e}}cotl}}, \binits{H.A.}},
\bauthor{\bsnm{{Urrutia}}, \binits{L.F.}},
\bauthor{\bsnm{{Vergara}}, \binits{J.D.}}:
\batitle{{The basis of the physical Hilbert space of lattice gauge theories}}.
\bjtitle{Nucl. Phys. B}
\bvolume{566},
\bfpage{547}--\blpage{561}
(\byear{2000})
\doiurl{10.1016/S0550-3213(99)00533-7}
\end{barticle}
\endbibitem

\bibitem[\protect\citeauthoryear{{Anishetty}
  et~al.}{2010}]{AnishettyMathurRaychowdhury2010}
\begin{barticle}
\bauthor{\bsnm{{Anishetty}}, \binits{R.}},
\bauthor{\bsnm{{Mathur}}, \binits{M.}},
\bauthor{\bsnm{{Raychowdhury}}, \binits{I.}}:
\batitle{{Prepotential formulation of SU(3) lattice gauge theory}}.
\bjtitle{J. Phys. A: Math. Theor.}
\bvolume{43},
\bfpage{035403}
(\byear{2010})
\doiurl{10.1088/1751-8113/43/3/035403}
\end{barticle}
\endbibitem

\bibitem[\protect\citeauthoryear{{Ligterink}
  et~al.}{2000}]{LigterinkWaletBishop2000ManyBodySUN}
\begin{barticle}
\bauthor{\bsnm{{Ligterink}}, \binits{N.E.}},
\bauthor{\bsnm{{Walet}}, \binits{N.R.}},
\bauthor{\bsnm{{Bishop}}, \binits{R.F.}}:
\batitle{{Toward a Many-Body Treatment of Hamiltonian Lattice SU(N) Gauge
  Theory}}.
\bjtitle{Ann. Phys.}
\bvolume{284},
\bfpage{215}--\blpage{262}
(\byear{2000})
\doiurl{10.1006/aphy.2000.6070}
\end{barticle}
\endbibitem

\bibitem[\protect\citeauthoryear{{Bronzan} and
  {Vaughan}}{1991}]{BronzanVaughan1991HamiltonianQCDBasis}
\begin{barticle}
\bauthor{\bsnm{{Bronzan}}, \binits{J.B.}},
\bauthor{\bsnm{{Vaughan}}, \binits{T.E.}}:
\batitle{{Basis states for Hamiltonian QCD}}.
\bjtitle{Phys. Rev. D}
\bvolume{44},
\bfpage{3264}--\blpage{3271}
(\byear{1991})
\doiurl{10.1103/PhysRevD.44.3264}
\end{barticle}
\endbibitem

\bibitem[\protect\citeauthoryear{{Mathew} and
  {Raychowdhury}}{2025}]{MathewRaychowdhury2025ProtectingSU3}
\begin{barticle}
\bauthor{\bsnm{{Mathew}}, \binits{E.}},
\bauthor{\bsnm{{Raychowdhury}}, \binits{I.}}:
\batitle{{Protecting gauge symmetries in the dynamics of SU(3) lattice gauge
  theories}}.
\bjtitle{Commun. Phys.}
\bvolume{8},
\bfpage{313}
(\byear{2025})
\doiurl{10.1038/s42005-025-02230-x}
\end{barticle}
\endbibitem

\bibitem[\protect\citeauthoryear{{Pardo}
  et~al.}{2023}]{PardoEtAl2023GaussFermionElimination}
\begin{barticle}
\bauthor{\bsnm{{Pardo}}, \binits{G.}},
\bauthor{\bsnm{{Greenberg}}, \binits{T.}},
\bauthor{\bsnm{{Fortinsky}}, \binits{A.}},
\bauthor{\bsnm{{Katz}}, \binits{N.}},
\bauthor{\bsnm{{Zohar}}, \binits{E.}}:
\batitle{{Resource-efficient quantum simulation of lattice gauge theories in
  arbitrary dimensions: Solving for Gauss's law and fermion elimination}}.
\bjtitle{Phys. Rev. Research}
\bvolume{5},
\bfpage{023077}
(\byear{2023})
\doiurl{10.1103/PhysRevResearch.5.023077}
\end{barticle}
\endbibitem

\bibitem[\protect\citeauthoryear{{de Swart}}{1963}]{deSwart1963SU3CG}
\begin{barticle}
\bauthor{\bsnm{{de Swart}}, \binits{J.J.}}:
\batitle{{The Octet Model and its Clebsch-Gordan Coefficients}}.
\bjtitle{Rev. Mod. Phys.}
\bvolume{35},
\bfpage{916}--\blpage{939}
(\byear{1963})
\doiurl{10.1103/RevModPhys.35.916}
\end{barticle}
\endbibitem

\bibitem[\protect\citeauthoryear{{Kaeding}}{1995}]{Kaeding1995SU3Isoscalar}
\begin{barticle}
\bauthor{\bsnm{{Kaeding}}, \binits{T.A.}}:
\batitle{{Tables of SU(3) Isoscalar Factors}}.
\bjtitle{At. Data Nucl. Data Tables}
\bvolume{61},
\bfpage{233}--\blpage{288}
(\byear{1995})
\doiurl{10.1006/adnd.1995.1011}
\end{barticle}
\endbibitem

\bibitem[\protect\citeauthoryear{{Mukunda} and {Pandit}}{1965}]{AddIII22}
\begin{barticle}
\bauthor{\bsnm{{Mukunda}}, \binits{N.}},
\bauthor{\bsnm{{Pandit}}, \binits{L.K.}}:
\batitle{{Tensor Methods and a Unified Representation Theory of ${SU}_3$}}.
\bjtitle{J. Math. Phys.}
\bvolume{6},
\bfpage{746}--\blpage{765}
(\byear{1965})
\doiurl{10.1063/1.1704332}
\end{barticle}
\endbibitem

\bibitem[\protect\citeauthoryear{{Hecht}}{1965}]{Hecht1965SU3Recoupling}
\begin{barticle}
\bauthor{\bsnm{{Hecht}}, \binits{K.T.}}:
\batitle{{SU3 recoupling and fractional parentage in the 2s-1d shell}}.
\bjtitle{Nucl. Phys.}
\bvolume{62},
\bfpage{1}--\blpage{36}
(\byear{1965})
\doiurl{10.1016/0029-5582(65)90068-4}
\end{barticle}
\endbibitem

\bibitem[\protect\citeauthoryear{{Balaji} et~al.}{2026}]{BalajiEtAl2026}
\begin{barticle}
\bauthor{\bsnm{{Balaji}}, \binits{P.}},
\bauthor{\bsnm{{Conefrey-Shinozaki}}, \binits{C.}},
\bauthor{\bsnm{{Draper}}, \binits{P.}},
\bauthor{\bsnm{{Elhaderi}}, \binits{J.K.}},
\bauthor{\bsnm{{Gupta}}, \binits{D.}},
\bauthor{\bsnm{{Hidalgo}}, \binits{L.}},
\bauthor{\bsnm{{Lytle}}, \binits{A.}}:
\batitle{{Perturbation theory, irrep truncations, and state preparation methods
  for quantum simulations of SU(3) lattice gauge theory}}.
\bjtitle{Phys. Rev. D}
\bvolume{113},
\bfpage{094505}
(\byear{2026})
\doiurl{10.1103/m719-7tdf}
\end{barticle}
\endbibitem

\bibitem[\protect\citeauthoryear{{Chaturvedi} and {Mukunda}}{2002}]{AddIII16}
\begin{barticle}
\bauthor{\bsnm{{Chaturvedi}}, \binits{S.}},
\bauthor{\bsnm{{Mukunda}}, \binits{N.}}:
\batitle{{The Schwinger SU(3) construction. I. Multiplicity problem and
  relation to induced representations}}.
\bjtitle{J. Math. Phys.}
\bvolume{43},
\bfpage{5262}--\blpage{5277}
(\byear{2002})
\doiurl{10.1063/1.1508810}
\end{barticle}
\endbibitem

\bibitem[\protect\citeauthoryear{{Anishetty} et~al.}{2009}]{AddIII15}
\begin{barticle}
\bauthor{\bsnm{{Anishetty}}, \binits{R.}},
\bauthor{\bsnm{{Mathur}}, \binits{M.}},
\bauthor{\bsnm{{Raychowdhury}}, \binits{I.}}:
\batitle{{Irreducible SU(3) Schwinger bosons}}.
\bjtitle{J. Math. Phys.}
\bvolume{50},
\bfpage{053503}
(\byear{2009})
\doiurl{10.1063/1.3122666}
\end{barticle}
\endbibitem

\bibitem[\protect\citeauthoryear{{Anishetty} and
  {Sreeraj}}{2019}]{AnishettySreeraj2019Singlets}
\begin{barticle}
\bauthor{\bsnm{{Anishetty}}, \binits{R.}},
\bauthor{\bsnm{{Sreeraj}}, \binits{T.P.}}:
\batitle{{Addition of SU(3) generators and its singlet Hilbert space}}.
\bjtitle{J. Math. Phys.}
\bvolume{60},
\bfpage{061701}
(\byear{2019})
\doiurl{10.1063/1.5096613}
\end{barticle}
\endbibitem

\bibitem[\protect\citeauthoryear{{Kadam} et~al.}{2025}]{KadamEtAl2025}
\begin{barticle}
\bauthor{\bsnm{{Kadam}}, \binits{S.V.}},
\bauthor{\bsnm{{Naskar}}, \binits{A.}},
\bauthor{\bsnm{{Raychowdhury}}, \binits{I.}},
\bauthor{\bsnm{{Stryker}}, \binits{J.R.}}:
\batitle{{Loop-string-hadron approach to SU(3) lattice Yang--Mills theory:
  Hilbert space of a trivalent vertex}}.
\bjtitle{Phys. Rev. D}
\bvolume{111},
\bfpage{074516}
(\byear{2025})
\doiurl{10.1103/PhysRevD.111.074516}
\end{barticle}
\endbibitem

\bibitem[\protect\citeauthoryear{{Pan} and
  {Draayer}}{1998}]{PanDraayer1998OuterMultiplicityI}
\begin{barticle}
\bauthor{\bsnm{{Pan}}, \binits{F.}},
\bauthor{\bsnm{{Draayer}}, \binits{J.P.}}:
\batitle{{Complementary group resolution of the SU(n) outer multiplicity
  problem. I. The Littlewood rules and a complementary U(2n-2) group
  structure}}.
\bjtitle{J. Math. Phys.}
\bvolume{39},
\bfpage{5631}--\blpage{5641}
(\byear{1998})
\doiurl{10.1063/1.532555}
\end{barticle}
\endbibitem

\bibitem[\protect\citeauthoryear{{Cramp{\'{e}}} et~al.}{2023}]{AddIII04}
\begin{barticle}
\bauthor{\bsnm{{Cramp{\'{e}}}}, \binits{N.}},
\bauthor{\bsnm{{Poulain d'Andecy}}, \binits{L.}},
\bauthor{\bsnm{{Vinet}}, \binits{L.}}:
\batitle{{The Missing Label of $\mathfrak{su}_3$ and Its Symmetry}}.
\bjtitle{Commun. Math. Phys.}
\bvolume{400},
\bfpage{179}--\blpage{213}
(\byear{2023})
\doiurl{10.1007/s00220-022-04596-3}
\end{barticle}
\endbibitem

\bibitem[\protect\citeauthoryear{{Gadiyar} and
  {Sharatchandra}}{1992}]{AddIII17}
\begin{barticle}
\bauthor{\bsnm{{Gadiyar}}, \binits{G.H.}},
\bauthor{\bsnm{{Sharatchandra}}, \binits{H.S.}}:
\batitle{{The missing link: operators for labelling multiplicity in the
  Clebsch-Gordan series}}.
\bjtitle{J. Phys. A: Math. Gen.}
\bvolume{25},
\bfpage{L85}--\blpage{L88}
(\byear{1992})
\doiurl{10.1088/0305-4470/25/3/001}
\end{barticle}
\endbibitem

\bibitem[\protect\citeauthoryear{{Pan} and
  {Draayer}}{1998}]{PanDraayer1998OuterMultiplicityII}
\begin{barticle}
\bauthor{\bsnm{{Pan}}, \binits{F.}},
\bauthor{\bsnm{{Draayer}}, \binits{J.P.}}:
\batitle{{Complementary group resolution of the SU(n) outer multiplicity
  problem. II. Recoupling approach for $SU(3)\supset U(2)$ reduced Wigner
  coefficients}}.
\bjtitle{J. Math. Phys.}
\bvolume{39},
\bfpage{5642}--\blpage{5662}
(\byear{1998})
\doiurl{10.1063/1.532556}
\end{barticle}
\endbibitem

\bibitem[\protect\citeauthoryear{{Burbano} and {Bauer}}{2025}]{AddI03}
\begin{barticle}
\bauthor{\bsnm{{Burbano}}, \binits{I.M.}},
\bauthor{\bsnm{{Bauer}}, \binits{C.W.}}:
\batitle{{Gauge loop-string-hadron formulation on general graphs and
  applications to fully gauge fixed Hamiltonian lattice gauge theory}}.
\bjtitle{J. High Energy Phys.}
\bvolume{2025}(\bissue{12}),
\bfpage{060}
(\byear{2025})
\doiurl{10.1007/jhep12(2025)060}
\end{barticle}
\endbibitem

\bibitem[\protect\citeauthoryear{{Mathur} et~al.}{2010}]{AddIII20}
\begin{barticle}
\bauthor{\bsnm{{Mathur}}, \binits{M.}},
\bauthor{\bsnm{{Raychowdhury}}, \binits{I.}},
\bauthor{\bsnm{{Anishetty}}, \binits{R.}}:
\batitle{{SU(N) irreducible Schwinger bosons}}.
\bjtitle{J. Math. Phys.}
\bvolume{51},
\bfpage{093504}
(\byear{2010})
\doiurl{10.1063/1.3464267}
\end{barticle}
\endbibitem

\bibitem[\protect\citeauthoryear{{Mathur}}{2007}]{AddIII21}
\begin{barticle}
\bauthor{\bsnm{{Mathur}}, \binits{M.}}:
\batitle{{Loop approach to lattice gauge theories}}.
\bjtitle{Nucl. Phys. B}
\bvolume{779},
\bfpage{32}--\blpage{62}
(\byear{2007})
\doiurl{10.1016/j.nuclphysb.2007.04.031}
\end{barticle}
\endbibitem

\bibitem[\protect\citeauthoryear{{Raychowdhury} and {Stryker}}{2020}]{AddIII08}
\begin{barticle}
\bauthor{\bsnm{{Raychowdhury}}, \binits{I.}},
\bauthor{\bsnm{{Stryker}}, \binits{J.R.}}:
\batitle{{Loop, string, and hadron dynamics in SU(2) Hamiltonian lattice gauge
  theories}}.
\bjtitle{Phys. Rev. D}
\bvolume{101},
\bfpage{114502}
(\byear{2020})
\doiurl{10.1103/physrevd.101.114502}
\end{barticle}
\endbibitem

\bibitem[\protect\citeauthoryear{{Davoudi} et~al.}{2021}]{AddIII28}
\begin{barticle}
\bauthor{\bsnm{{Davoudi}}, \binits{Z.}},
\bauthor{\bsnm{{Raychowdhury}}, \binits{I.}},
\bauthor{\bsnm{{Shaw}}, \binits{A.}}:
\batitle{{Search for efficient formulations for Hamiltonian simulation of
  non-Abelian lattice gauge theories}}.
\bjtitle{Phys. Rev. D}
\bvolume{104},
\bfpage{074505}
(\byear{2021})
\doiurl{10.1103/physrevd.104.074505}
\end{barticle}
\endbibitem

\bibitem[\protect\citeauthoryear{{Mathur} and {Rathor}}{2023}]{AddIV11}
\begin{barticle}
\bauthor{\bsnm{{Mathur}}, \binits{M.}},
\bauthor{\bsnm{{Rathor}}, \binits{A.}}:
\batitle{{Exact duality and local dynamics in SU(N) lattice gauge theory}}.
\bjtitle{Phys. Rev. D}
\bvolume{107},
\bfpage{074504}
(\byear{2023})
\doiurl{10.1103/physrevd.107.074504}
\end{barticle}
\endbibitem

\bibitem[\protect\citeauthoryear{{Mariani}}{2024}]{AddIII02}
\begin{barticle}
\bauthor{\bsnm{{Mariani}}, \binits{A.}}:
\batitle{{Almost gauge-invariant states and the ground state of Yang-Mills
  theory}}.
\bjtitle{Phys. Rev. D}
\bvolume{109},
\bfpage{094508}
(\byear{2024})
\doiurl{10.1103/physrevd.109.094508}
\end{barticle}
\endbibitem

\bibitem[\protect\citeauthoryear{{Grabowska} et~al.}{2025}]{AddIII19}
\begin{barticle}
\bauthor{\bsnm{{Grabowska}}, \binits{D.M.}},
\bauthor{\bsnm{{Kane}}, \binits{C.F.}},
\bauthor{\bsnm{{Bauer}}, \binits{C.W.}}:
\batitle{{Fully gauge-fixed SU(2) Hamiltonian for quantum simulations}}.
\bjtitle{Phys. Rev. D}
\bvolume{111},
\bfpage{114516}
(\byear{2025})
\doiurl{10.1103/physrevd.111.114516}
\end{barticle}
\endbibitem

\bibitem[\protect\citeauthoryear{{D'Andrea} et~al.}{2024}]{AddIV14}
\begin{barticle}
\bauthor{\bsnm{{D'Andrea}}, \binits{I.}},
\bauthor{\bsnm{{Bauer}}, \binits{C.W.}},
\bauthor{\bsnm{{Grabowska}}, \binits{D.M.}},
\bauthor{\bsnm{{Freytsis}}, \binits{M.}}:
\batitle{{New basis for Hamiltonian SU(2) simulations}}.
\bjtitle{Phys. Rev. D}
\bvolume{109},
\bfpage{074501}
(\byear{2024})
\doiurl{10.1103/physrevd.109.074501}
\end{barticle}
\endbibitem

\bibitem[\protect\citeauthoryear{{Kaplan} and {Stryker}}{2020}]{AddIII18}
\begin{barticle}
\bauthor{\bsnm{{Kaplan}}, \binits{D.B.}},
\bauthor{\bsnm{{Stryker}}, \binits{J.R.}}:
\batitle{{Gauss's law, duality, and the Hamiltonian formulation of U(1) lattice
  gauge theory}}.
\bjtitle{Phys. Rev. D}
\bvolume{102},
\bfpage{094515}
(\byear{2020})
\doiurl{10.1103/physrevd.102.094515}
\end{barticle}
\endbibitem

\bibitem[\protect\citeauthoryear{{Chandrasekharan} et~al.}{2025}]{AddI01}
\begin{barticle}
\bauthor{\bsnm{{Chandrasekharan}}, \binits{S.}},
\bauthor{\bsnm{{Siew}}, \binits{R.X.}},
\bauthor{\bsnm{{Bhattacharya}}, \binits{T.}}:
\batitle{{Monomer-dimer tensor-network basis for qubit-regularized lattice
  gauge theories}}.
\bjtitle{Phys. Rev. D}
\bvolume{111},
\bfpage{114502}
(\byear{2025})
\doiurl{10.1103/gns9-l3gk}
\end{barticle}
\endbibitem

\bibitem[\protect\citeauthoryear{{Hayata} and
  {Hidaka}}{2023}]{HayataHidaka2023QDeformed}
\begin{barticle}
\bauthor{\bsnm{{Hayata}}, \binits{T.}},
\bauthor{\bsnm{{Hidaka}}, \binits{Y.}}:
\batitle{{q deformed formulation of Hamiltonian SU(3) Yang--Mills theory}}.
\bjtitle{J. High Energy Phys.}
\bvolume{2023}(\bissue{09}),
\bfpage{123}
(\byear{2023})
\doiurl{10.1007/JHEP09(2023)123}
\end{barticle}
\endbibitem

\bibitem[\protect\citeauthoryear{{Zache} et~al.}{2023}]{AddIII25}
\begin{barticle}
\bauthor{\bsnm{{Zache}}, \binits{T.V.}},
\bauthor{\bsnm{{Gonz{\'{a}}lez-Cuadra}}, \binits{D.}},
\bauthor{\bsnm{{Zoller}}, \binits{P.}}:
\batitle{{Quantum and Classical Spin-Network Algorithms for q-Deformed
  Kogut-Susskind Gauge Theories}}.
\bjtitle{Phys. Rev. Lett.}
\bvolume{131},
\bfpage{171902}
(\byear{2023})
\doiurl{10.1103/physrevlett.131.171902}
\end{barticle}
\endbibitem

\bibitem[\protect\citeauthoryear{{Ciavarella}}{2023}]{AddIV07}
\begin{barticle}
\bauthor{\bsnm{{Ciavarella}}, \binits{A.N.}}:
\batitle{{Quantum simulation of lattice QCD with improved Hamiltonians}}.
\bjtitle{Phys. Rev. D}
\bvolume{108},
\bfpage{094513}
(\byear{2023})
\doiurl{10.1103/physrevd.108.094513}
\end{barticle}
\endbibitem

\bibitem[\protect\citeauthoryear{{Fontana} et~al.}{2025}]{AddIII13}
\begin{barticle}
\bauthor{\bsnm{{Fontana}}, \binits{P.}},
\bauthor{\bsnm{{Miranda-Riaza}}, \binits{M.}},
\bauthor{\bsnm{{Celi}}, \binits{A.}}:
\batitle{{Efficient Finite-Resource Formulation of Non-Abelian Lattice Gauge
  Theories beyond One Dimension}}.
\bjtitle{Phys. Rev. X}
\bvolume{15},
\bfpage{031065}
(\byear{2025})
\doiurl{10.1103/k9p6-c649}
\end{barticle}
\endbibitem

\bibitem[\protect\citeauthoryear{{Zohar} and
  {Burrello}}{2015}]{ZoharBurrello2015}
\begin{barticle}
\bauthor{\bsnm{{Zohar}}, \binits{E.}},
\bauthor{\bsnm{{Burrello}}, \binits{M.}}:
\batitle{{Formulation of lattice gauge theories for quantum simulations}}.
\bjtitle{Phys. Rev. D}
\bvolume{91},
\bfpage{054506}
(\byear{2015})
\doiurl{10.1103/PhysRevD.91.054506}
\end{barticle}
\endbibitem

\bibitem[\protect\citeauthoryear{{Hayata} and {Hidaka}}{2023}]{AddIII06}
\begin{barticle}
\bauthor{\bsnm{{Hayata}}, \binits{T.}},
\bauthor{\bsnm{{Hidaka}}, \binits{Y.}}:
\batitle{{String-net formulation of Hamiltonian lattice Yang-Mills theories and
  quantum many-body scars in a nonabelian gauge theory}}.
\bjtitle{J. High Energy Phys.}
\bvolume{2023}(\bissue{09}),
\bfpage{126}
(\byear{2023})
\doiurl{10.1007/jhep09(2023)126}
\end{barticle}
\endbibitem

\bibitem[\protect\citeauthoryear{{Hayata} et~al.}{2026}]{AddIV09}
\begin{barticle}
\bauthor{\bsnm{{Hayata}}, \binits{T.}},
\bauthor{\bsnm{{Hidaka}}, \binits{Y.}},
\bauthor{\bsnm{{Watanabe}}, \binits{H.}}:
\batitle{{Phases of the q-deformed SU(N) Yang-Mills theory at large N}}.
\bjtitle{Phys. Rev. D}
\bvolume{113},
\bfpage{074517}
(\byear{2026})
\doiurl{10.1103/ltwh-42sk}
\end{barticle}
\endbibitem

\bibitem[\protect\citeauthoryear{{Bonzom} et~al.}{2023}]{AddVIII02}
\begin{barticle}
\bauthor{\bsnm{{Bonzom}}, \binits{V.}},
\bauthor{\bsnm{{Dupuis}}, \binits{M.}},
\bauthor{\bsnm{{Girelli}}, \binits{F.}},
\bauthor{\bsnm{{Pan}}, \binits{Q.}}:
\batitle{{Local observables in $\mathrm{SU}_q(2)$ lattice gauge theory}}.
\bjtitle{Phys. Rev. D}
\bvolume{107},
\bfpage{026014}
(\byear{2023})
\doiurl{10.1103/PhysRevD.107.026014}
\end{barticle}
\endbibitem

\bibitem[\protect\citeauthoryear{{Illa} et~al.}{2025}]{AddIII14}
\begin{barticle}
\bauthor{\bsnm{{Illa}}, \binits{M.}},
\bauthor{\bsnm{{Savage}}, \binits{M.J.}},
\bauthor{\bsnm{{Yao}}, \binits{X.}}:
\batitle{{Improved honeycomb and hyperhoneycomb lattice Hamiltonians for
  quantum simulations of non-Abelian gauge theories}}.
\bjtitle{Phys. Rev. D}
\bvolume{111},
\bfpage{114520}
(\byear{2025})
\doiurl{10.1103/3rwf-f844}
\end{barticle}
\endbibitem

\bibitem[\protect\citeauthoryear{{Brower}
  et~al.}{1999}]{BrowerChandrasekharanWiese1999}
\begin{barticle}
\bauthor{\bsnm{{Brower}}, \binits{R.}},
\bauthor{\bsnm{{Chandrasekharan}}, \binits{S.}},
\bauthor{\bsnm{{Wiese}}, \binits{U.-J.}}:
\batitle{{QCD as a quantum link model}}.
\bjtitle{Phys. Rev. D}
\bvolume{60},
\bfpage{094502}
(\byear{1999})
\doiurl{10.1103/PhysRevD.60.094502}
\end{barticle}
\endbibitem

\bibitem[\protect\citeauthoryear{{Chandrasekharan} and {Wiese}}{1997}]{AddIV08}
\begin{barticle}
\bauthor{\bsnm{{Chandrasekharan}}, \binits{S.}},
\bauthor{\bsnm{{Wiese}}, \binits{U.-J.}}:
\batitle{{Quantum link models: A discrete approach to gauge theories}}.
\bjtitle{Nucl. Phys. B}
\bvolume{492},
\bfpage{455}--\blpage{471}
(\byear{1997})
\doiurl{10.1016/s0550-3213(97)80041-7}
\end{barticle}
\endbibitem

\bibitem[\protect\citeauthoryear{{Pradhan} et~al.}{2024}]{AddII06}
\begin{barticle}
\bauthor{\bsnm{{Pradhan}}, \binits{S.}},
\bauthor{\bsnm{{Maroncelli}}, \binits{A.}},
\bauthor{\bsnm{{Ercolessi}}, \binits{E.}}:
\batitle{{Discrete Abelian lattice gauge theories on a ladder and their
  dualities with quantum clock models}}.
\bjtitle{Phys. Rev. B}
\bvolume{109},
\bfpage{064410}
(\byear{2024})
\doiurl{10.1103/PhysRevB.109.064410}
\end{barticle}
\endbibitem

\bibitem[\protect\citeauthoryear{{Ciavarella} et~al.}{2025}]{AddI04}
\begin{barticle}
\bauthor{\bsnm{{Ciavarella}}, \binits{A.N.}},
\bauthor{\bsnm{{Burbano}}, \binits{I.M.}},
\bauthor{\bsnm{{Bauer}}, \binits{C.W.}}:
\batitle{{Efficient truncations of $SU(N_c)$ lattice gauge theory for quantum
  simulation}}.
\bjtitle{Phys. Rev. D}
\bvolume{112},
\bfpage{054514}
(\byear{2025})
\doiurl{10.1103/ylqb-phv5}
\end{barticle}
\endbibitem

\bibitem[\protect\citeauthoryear{Ciavarella et~al.}{2026}]{AddI05}
\begin{barticle}
\bauthor{\bsnm{Ciavarella}, \binits{A.N.}},
\bauthor{\bsnm{Hariprakash}, \binits{S.}},
\bauthor{\bsnm{Halimeh}, \binits{J.C.}},
\bauthor{\bsnm{Bauer}, \binits{C.W.}}:
\batitle{{Truncation uncertainties for accurate quantum simulations of lattice
  gauge theories}}.
\bjtitle{Quantum}
\bvolume{10},
\bfpage{2216}
(\byear{2026})
\doiurl{10.22331/q-2026-09-24-2216}
{\href{https://arxiv.org/abs/2508.00061}{{arXiv:2508.00061}}}
{[quant-ph]}
\end{barticle}
\endbibitem

\bibitem[\protect\citeauthoryear{{Liu} and
  {Chandrasekharan}}{2022}]{LiuChandrasekharan2022QubitRegularization}
\begin{barticle}
\bauthor{\bsnm{{Liu}}, \binits{H.}},
\bauthor{\bsnm{{Chandrasekharan}}, \binits{S.}}:
\batitle{{Qubit Regularization and Qubit Embedding Algebras}}.
\bjtitle{Symmetry}
\bvolume{14},
\bfpage{305}
(\byear{2022})
\doiurl{10.3390/sym14020305}
\end{barticle}
\endbibitem

\bibitem[\protect\citeauthoryear{{Ciavarella} and {Bauer}}{2024}]{AddIV02}
\begin{barticle}
\bauthor{\bsnm{{Ciavarella}}, \binits{A.N.}},
\bauthor{\bsnm{{Bauer}}, \binits{C.W.}}:
\batitle{{Quantum Simulation of SU(3) Lattice Yang-Mills Theory at Leading
  Order in Large-$N_c$ Expansion}}.
\bjtitle{Phys. Rev. Lett.}
\bvolume{133},
\bfpage{111901}
(\byear{2024})
\doiurl{10.1103/physrevlett.133.111901}
\end{barticle}
\endbibitem

\bibitem[\protect\citeauthoryear{{Assi} and {Lamm}}{2024}]{AddIV16}
\begin{barticle}
\bauthor{\bsnm{{Assi}}, \binits{B.}},
\bauthor{\bsnm{{Lamm}}, \binits{H.}}:
\batitle{{Digitization and subduction of SU(N) gauge theories}}.
\bjtitle{Phys. Rev. D}
\bvolume{110},
\bfpage{074511}
(\byear{2024})
\doiurl{10.1103/physrevd.110.074511}
\end{barticle}
\endbibitem

\bibitem[\protect\citeauthoryear{{Romiti} and {Urbach}}{2024}]{AddIV15}
\begin{barticle}
\bauthor{\bsnm{{Romiti}}, \binits{S.}},
\bauthor{\bsnm{{Urbach}}, \binits{C.}}:
\batitle{{Digitizing lattice gauge theories in the magnetic basis: reducing the
  breaking of the fundamental commutation relations}}.
\bjtitle{Eur. Phys. J. C}
\bvolume{84},
\bfpage{708}
(\byear{2024})
\doiurl{10.1140/epjc/s10052-024-13037-5}
\end{barticle}
\endbibitem

\bibitem[\protect\citeauthoryear{{Carena}
  et~al.}{2022}]{CarenaEtAl2022ImprovedHamiltonians}
\begin{barticle}
\bauthor{\bsnm{{Carena}}, \binits{M.}},
\bauthor{\bsnm{{Lamm}}, \binits{H.}},
\bauthor{\bsnm{{Li}}, \binits{Y.-Y.}},
\bauthor{\bsnm{{Liu}}, \binits{W.}}:
\batitle{{Improved Hamiltonians for Quantum Simulations of Gauge Theories}}.
\bjtitle{Phys. Rev. Lett.}
\bvolume{129},
\bfpage{051601}
(\byear{2022})
\doiurl{10.1103/PhysRevLett.129.051601}
\end{barticle}
\endbibitem

\bibitem[\protect\citeauthoryear{{Tong}
  et~al.}{2022}]{TongEtAl2022ProvablyAccurate}
\begin{barticle}
\bauthor{\bsnm{{Tong}}, \binits{Y.}},
\bauthor{\bsnm{{Albert}}, \binits{V.V.}},
\bauthor{\bsnm{{McClean}}, \binits{J.R.}},
\bauthor{\bsnm{{Preskill}}, \binits{J.}},
\bauthor{\bsnm{{Su}}, \binits{Y.}}:
\batitle{{Provably accurate simulation of gauge theories and bosonic systems}}.
\bjtitle{Quantum}
\bvolume{6},
\bfpage{816}
(\byear{2022})
\doiurl{10.22331/q-2022-09-22-816}
\end{barticle}
\endbibitem

\bibitem[\protect\citeauthoryear{{Ciavarella}
  et~al.}{2025}]{CiavarellaBauerHalimeh2025Fragmentation}
\begin{barticle}
\bauthor{\bsnm{{Ciavarella}}, \binits{A.N.}},
\bauthor{\bsnm{{Bauer}}, \binits{C.W.}},
\bauthor{\bsnm{{Halimeh}}, \binits{J.C.}}:
\batitle{{Generic Hilbert space fragmentation in Kogut--Susskind lattice gauge
  theories}}.
\bjtitle{Phys. Rev. D}
\bvolume{112},
\bfpage{L091501}
(\byear{2025})
\doiurl{10.1103/p4xd-4ngx}
\end{barticle}
\endbibitem

\bibitem[\protect\citeauthoryear{{Klinger} and
  {Leigh}}{2024}]{KlingerLeigh2024ConditionalExpectations}
\begin{barticle}
\bauthor{\bsnm{{Klinger}}, \binits{M.S.}},
\bauthor{\bsnm{{Leigh}}, \binits{R.G.}}:
\batitle{{Crossed products, conditional expectations and constraint
  quantization}}.
\bjtitle{Nucl. Phys. B}
\bvolume{1006},
\bfpage{116622}
(\byear{2024})
\doiurl{10.1016/j.nuclphysb.2024.116622}
\end{barticle}
\endbibitem

\bibitem[\protect\citeauthoryear{{Hollenberg} and
  {Witte}}{1994}]{HollenbergWitte1994EnergyDensity}
\begin{barticle}
\bauthor{\bsnm{{Hollenberg}}, \binits{L.C.L.}},
\bauthor{\bsnm{{Witte}}, \binits{N.S.}}:
\batitle{{General nonperturbative estimate of the energy density of lattice
  Hamiltonians}}.
\bjtitle{Phys. Rev. D}
\bvolume{50},
\bfpage{3382}--\blpage{3386}
(\byear{1994})
\doiurl{10.1103/PhysRevD.50.3382}
\end{barticle}
\endbibitem

\bibitem[\protect\citeauthoryear{{Bronzan} and
  {G{\"{u}}nal}}{1994}]{BronzanGunal1994PureSU3LargeLattice}
\begin{barticle}
\bauthor{\bsnm{{Bronzan}}, \binits{J.B.}},
\bauthor{\bsnm{{G{\"{u}}nal}}, \binits{Y.}}:
\batitle{{Solving Schr{\"{o}}dinger's equation for pure SU(3) on a large
  lattice}}.
\bjtitle{Phys. Rev. D}
\bvolume{50},
\bfpage{5924}--\blpage{5934}
(\byear{1994})
\doiurl{10.1103/PhysRevD.50.5924}
\end{barticle}
\endbibitem

\bibitem[\protect\citeauthoryear{{Grundling} and {Rudolph}}{2017}]{AddIV05}
\begin{barticle}
\bauthor{\bsnm{{Grundling}}, \binits{H.}},
\bauthor{\bsnm{{Rudolph}}, \binits{G.}}:
\batitle{{Dynamics for QCD on an Infinite Lattice}}.
\bjtitle{Commun. Math. Phys.}
\bvolume{349},
\bfpage{1163}--\blpage{1202}
(\byear{2017})
\doiurl{10.1007/s00220-016-2733-5}
\end{barticle}
\endbibitem

\bibitem[\protect\citeauthoryear{{Grundling} and {Rudolph}}{2013}]{AddIV06}
\begin{barticle}
\bauthor{\bsnm{{Grundling}}, \binits{H.}},
\bauthor{\bsnm{{Rudolph}}, \binits{G.}}:
\batitle{{QCD on an Infinite Lattice}}.
\bjtitle{Commun. Math. Phys.}
\bvolume{318},
\bfpage{717}--\blpage{766}
(\byear{2013})
\doiurl{10.1007/s00220-013-1674-5}
\end{barticle}
\endbibitem

\bibitem[\protect\citeauthoryear{{Balachandran} et~al.}{2022}]{AddII08}
\begin{barticle}
\bauthor{\bsnm{{Balachandran}}, \binits{A.P.}},
\bauthor{\bsnm{{Nair}}, \binits{V.P.}},
\bauthor{\bsnm{{Pinzul}}, \binits{A.}},
\bauthor{\bsnm{{Reyes-Lega}}, \binits{A.F.}},
\bauthor{\bsnm{{Vaidya}}, \binits{S.}}:
\batitle{{Superselection, boundary algebras, and duality in gauge theories}}.
\bjtitle{Phys. Rev. D}
\bvolume{106},
\bfpage{025001}
(\byear{2022})
\doiurl{10.1103/PhysRevD.106.025001}
\end{barticle}
\endbibitem

\bibitem[\protect\citeauthoryear{{Kijowski} and {Rudolph}}{2005}]{AddII12}
\begin{barticle}
\bauthor{\bsnm{{Kijowski}}, \binits{J.}},
\bauthor{\bsnm{{Rudolph}}, \binits{G.}}:
\batitle{{Charge superselection sectors for QCD on the lattice}}.
\bjtitle{J. Math. Phys.}
\bvolume{46},
\bfpage{032303}
(\byear{2005})
\doiurl{10.1063/1.1851604}
\end{barticle}
\endbibitem

\bibitem[\protect\citeauthoryear{{Kijowski} and {Rudolph}}{2002}]{AddII23}
\begin{barticle}
\bauthor{\bsnm{{Kijowski}}, \binits{J.}},
\bauthor{\bsnm{{Rudolph}}, \binits{G.}}:
\batitle{{On the Gauss law and global charge for quantum chromodynamics}}.
\bjtitle{J. Math. Phys.}
\bvolume{43},
\bfpage{1796}--\blpage{1808}
(\byear{2002})
\doiurl{10.1063/1.1447310}
\end{barticle}
\endbibitem

\bibitem[\protect\citeauthoryear{{Balachandran} et~al.}{2019}]{AddII19}
\begin{barticle}
\bauthor{\bsnm{{Balachandran}}, \binits{A.P.}},
\bauthor{\bsnm{{Nair}}, \binits{V.P.}},
\bauthor{\bsnm{{Vaidya}}, \binits{S.}}:
\batitle{{Aspects of boundary conditions for non-Abelian gauge theories}}.
\bjtitle{Phys. Rev. D}
\bvolume{100},
\bfpage{045001}
(\byear{2019})
\doiurl{10.1103/PhysRevD.100.045001}
\end{barticle}
\endbibitem

\bibitem[\protect\citeauthoryear{{Cohen}}{2025}]{AddVII02}
\begin{barticle}
\bauthor{\bsnm{{Cohen}}, \binits{T.D.}}:
\batitle{{Gauge invariance and color charge fluctuations}}.
\bjtitle{Phys. Rev. C}
\bvolume{112},
\bfpage{L042201}
(\byear{2025})
\doiurl{10.1103/czq9-xl61}
\end{barticle}
\endbibitem

\bibitem[\protect\citeauthoryear{{Henheik}
  et~al.}{2022}]{HenheikTeufelWessel2022LocalStability}
\begin{barticle}
\bauthor{\bsnm{{Henheik}}, \binits{J.}},
\bauthor{\bsnm{{Teufel}}, \binits{S.}},
\bauthor{\bsnm{{Wessel}}, \binits{T.}}:
\batitle{{Local stability of ground states in locally gapped and weakly
  interacting quantum spin systems}}.
\bjtitle{Lett. Math. Phys.}
\bvolume{112},
\bfpage{9}
(\byear{2022})
\doiurl{10.1007/s11005-021-01494-y}
\end{barticle}
\endbibitem

\bibitem[\protect\citeauthoryear{{Kanazawa}}{2009}]{Kanazawa2009TomboulisYaffeSUN}
\begin{barticle}
\bauthor{\bsnm{{Kanazawa}}, \binits{T.}}:
\batitle{{Generalizing the Tomboulis--Yaffe inequality to SU(N) lattice gauge
  theories and general classical spin systems}}.
\bjtitle{Ann. Phys.}
\bvolume{324},
\bfpage{1634}--\blpage{1665}
(\byear{2009})
\doiurl{10.1016/j.aop.2009.04.002}
\end{barticle}
\endbibitem

\bibitem[\protect\citeauthoryear{{Morikawa} and {Suzuki}}{2025}]{AddII01}
\begin{barticle}
\bauthor{\bsnm{{Morikawa}}, \binits{O.}},
\bauthor{\bsnm{{Suzuki}}, \binits{H.}}:
\batitle{{Direct Monte Carlo Computation of the 't~Hooft Partition Function}}.
\bjtitle{Prog. Theor. Exp. Phys.}
\bvolume{2025},
\bfpage{063B04}
(\byear{2025})
\doiurl{10.1093/ptep/ptaf072}
\end{barticle}
\endbibitem

\bibitem[\protect\citeauthoryear{{Maeda} and {Tanizaki}}{2025}]{AddII07}
\begin{barticle}
\bauthor{\bsnm{{Maeda}}, \binits{J.}},
\bauthor{\bsnm{{Tanizaki}}, \binits{Y.}}:
\batitle{{Twisted partition functions as order parameters}}.
\bjtitle{J. High Energy Phys.}
\bvolume{2025}(\bissue{08}),
\bfpage{128}
(\byear{2025})
\doiurl{10.1007/JHEP08(2025)128}
\end{barticle}
\endbibitem

\bibitem[\protect\citeauthoryear{Chen et~al.}{2026}]{ChenMullerYao2026}
\begin{barticle}
\bauthor{\bsnm{Chen}, \binits{V.}},
\bauthor{\bsnm{M{\"u}ller}, \binits{B.}},
\bauthor{\bsnm{Yao}, \binits{X.}}:
\batitle{{Minimally Truncated SU(3) Lattice Gauge Theory and String Tension}}.
\bjtitle{Phys. Rev. D}
(\byear{2026})
\doiurl{10.1103/1qt7-3ywl}
{\href{https://arxiv.org/abs/2601.10065}{{arXiv:2601.10065}}}
{[hep-lat]}.
\bcomment{Accepted for publication, 11 September 2026}
\end{barticle}
\endbibitem

\bibitem[\protect\citeauthoryear{{Chin}
  et~al.}{1988}]{ChinLongRobson1988SU3GroundState}
\begin{barticle}
\bauthor{\bsnm{{Chin}}, \binits{S.A.}},
\bauthor{\bsnm{{Long}}, \binits{C.}},
\bauthor{\bsnm{{Robson}}, \binits{D.}}:
\batitle{{Exact ground-state properties of SU(3) Hamiltonian lattice gauge
  theory}}.
\bjtitle{Phys. Rev. D}
\bvolume{37},
\bfpage{3001}--\blpage{3005}
(\byear{1988})
\doiurl{10.1103/PhysRevD.37.3001}
\end{barticle}
\endbibitem

\bibitem[\protect\citeauthoryear{{Hamer}
  et~al.}{2000}]{HamerSamarasBursill2000SU3GFMC}
\begin{barticle}
\bauthor{\bsnm{{Hamer}}, \binits{C.J.}},
\bauthor{\bsnm{{Samaras}}, \binits{M.}},
\bauthor{\bsnm{{Bursill}}, \binits{R.J.}}:
\batitle{{Green's Function Monte Carlo study of SU(3) lattice gauge theory in
  (3+1)D}}.
\bjtitle{Phys. Rev. D}
\bvolume{62},
\bfpage{074506}
(\byear{2000})
\doiurl{10.1103/PhysRevD.62.074506}
\end{barticle}
\endbibitem

\bibitem[\protect\citeauthoryear{{Buyens} et~al.}{2017}]{AddIV03}
\begin{barticle}
\bauthor{\bsnm{{Buyens}}, \binits{B.}},
\bauthor{\bsnm{{Montangero}}, \binits{S.}},
\bauthor{\bsnm{{Haegeman}}, \binits{J.}},
\bauthor{\bsnm{{Verstraete}}, \binits{F.}},
\bauthor{\bsnm{{Van Acoleyen}}, \binits{K.}}:
\batitle{{Finite-representation approximation of lattice gauge theories at the
  continuum limit with tensor networks}}.
\bjtitle{Phys. Rev. D}
\bvolume{95},
\bfpage{094509}
(\byear{2017})
\doiurl{10.1103/physrevd.95.094509}
\end{barticle}
\endbibitem

\bibitem[\protect\citeauthoryear{{DeGrand}
  et~al.}{1996}]{DeGrandEtAl1996FixedPointSU3}
\begin{barticle}
\bauthor{\bsnm{{DeGrand}}, \binits{T.}},
\bauthor{\bsnm{{Hasenfratz}}, \binits{A.}},
\bauthor{\bsnm{{Hasenfratz}}, \binits{P.}},
\bauthor{\bsnm{{Niedermayer}}, \binits{F.}}:
\batitle{{Fixed point actions for SU(3) gauge theory}}.
\bjtitle{Phys. Lett. B}
\bvolume{365},
\bfpage{233}--\blpage{238}
(\byear{1996})
\doiurl{10.1016/0370-2693(95)01233-8}
\end{barticle}
\endbibitem

\bibitem[\protect\citeauthoryear{{Chin}
  et~al.}{1986}]{ChinLongRobson1986GlueballScaling}
\begin{barticle}
\bauthor{\bsnm{{Chin}}, \binits{S.A.}},
\bauthor{\bsnm{{Long}}, \binits{C.}},
\bauthor{\bsnm{{Robson}}, \binits{D.}}:
\batitle{{Scaling of the $0^{++}$ Glueball Mass in SU(N) Hamiltonian Lattice
  Calculations}}.
\bjtitle{Phys. Rev. Lett.}
\bvolume{57},
\bfpage{2779}--\blpage{2782}
(\byear{1986})
\doiurl{10.1103/PhysRevLett.57.2779}
\end{barticle}
\endbibitem

\bibitem[\protect\citeauthoryear{{Alexandru}
  et~al.}{2022}]{AlexandruBedaqueBrettLamm2022DigitizedQCD}
\begin{barticle}
\bauthor{\bsnm{{Alexandru}}, \binits{A.}},
\bauthor{\bsnm{{Bedaque}}, \binits{P.F.}},
\bauthor{\bsnm{{Brett}}, \binits{R.}},
\bauthor{\bsnm{{Lamm}}, \binits{H.}}:
\batitle{{Spectrum of digitized QCD: Glueballs in a $S(1080)$ gauge theory}}.
\bjtitle{Phys. Rev. D}
\bvolume{105},
\bfpage{114508}
(\byear{2022})
\doiurl{10.1103/PhysRevD.105.114508}
\end{barticle}
\endbibitem

\bibitem[\protect\citeauthoryear{{Siew}
  et~al.}{2026}]{SiewChandrasekharanBhattacharya2026SU2Qubit}
\begin{barticle}
\bauthor{\bsnm{{Siew}}, \binits{R.X.}},
\bauthor{\bsnm{{Chandrasekharan}}, \binits{S.}},
\bauthor{\bsnm{{Bhattacharya}}, \binits{T.}}:
\batitle{{Asymptotic freedom and massive glueballs in a qubit-regularized SU(2)
  gauge theory}}.
\bjtitle{Phys. Rev. D}
\bvolume{113},
\bfpage{114522}
(\byear{2026})
\doiurl{10.1103/2cy4-cp7f}
\end{barticle}
\endbibitem

\bibitem[\protect\citeauthoryear{{Jakobs}
  et~al.}{2025}]{JakobsEtAl2025PartitioningsSU2}
\begin{barticle}
\bauthor{\bsnm{{Jakobs}}, \binits{T.}},
\bauthor{\bsnm{{Garofalo}}, \binits{M.}},
\bauthor{\bsnm{{Hartung}}, \binits{T.}},
\bauthor{\bsnm{{Jansen}}, \binits{K.}},
\bauthor{\bsnm{{Ostmeyer}}, \binits{J.}},
\bauthor{\bsnm{{Romiti}}, \binits{S.}},
\bauthor{\bsnm{{Urbach}}, \binits{C.}}:
\batitle{{Dynamics in hamiltonian lattice gauge theory: approaching the
  continuum limit with partitionings of SU(2)}}.
\bjtitle{Eur. Phys. J. C}
\bvolume{85},
\bfpage{1418}
(\byear{2025})
\doiurl{10.1140/epjc/s10052-025-15120-x}
\end{barticle}
\endbibitem

\bibitem[\protect\citeauthoryear{{Sharifian} et~al.}{2023}]{AddVI24}
\begin{barticle}
\bauthor{\bsnm{{Sharifian}}, \binits{A.}},
\bauthor{\bsnm{{Cardoso}}, \binits{N.}},
\bauthor{\bsnm{{Bicudo}}, \binits{P.}}:
\batitle{{Eight very excited flux tube spectra and possible axions in SU(3)
  lattice gauge theory}}.
\bjtitle{Phys. Rev. D}
\bvolume{107},
\bfpage{114507}
(\byear{2023})
\doiurl{10.1103/PhysRevD.107.114507}
\end{barticle}
\endbibitem

\bibitem[\protect\citeauthoryear{{Bicudo} et~al.}{2021}]{AddVI25}
\begin{barticle}
\bauthor{\bsnm{{Bicudo}}, \binits{P.}},
\bauthor{\bsnm{{Cardoso}}, \binits{N.}},
\bauthor{\bsnm{{Sharifian}}, \binits{A.}}:
\batitle{{Spectrum of very excited $\Sigma_g^+$ flux tubes in SU(3) gauge
  theory}}.
\bjtitle{Phys. Rev. D}
\bvolume{104},
\bfpage{054512}
(\byear{2021})
\doiurl{10.1103/PhysRevD.104.054512}
\end{barticle}
\endbibitem

\bibitem[\protect\citeauthoryear{{Kraus}}{1971}]{Kraus1971GeneralStateChanges}
\begin{barticle}
\bauthor{\bsnm{{Kraus}}, \binits{K.}}:
\batitle{{General State Changes in Quantum Theory}}.
\bjtitle{Ann. Phys.}
\bvolume{64},
\bfpage{311}--\blpage{335}
(\byear{1971})
\doiurl{10.1016/0003-4916(71)90108-4}
\end{barticle}
\endbibitem

\bibitem[\protect\citeauthoryear{{Choi}}{1975}]{Choi1975}
\begin{barticle}
\bauthor{\bsnm{{Choi}}, \binits{M.-D.}}:
\batitle{{Completely positive linear maps on complex matrices}}.
\bjtitle{Linear Algebra Appl.}
\bvolume{10},
\bfpage{285}--\blpage{290}
(\byear{1975})
\doiurl{10.1016/0024-3795(75)90075-0}
\end{barticle}
\endbibitem

\newpage

\bibitem[\protect\citeauthoryear{{Watrous}}{2018}]{Watrous2018}
\begin{bbook}
\bauthor{\bsnm{{Watrous}}, \binits{J.}}:
\bbtitle{The {Theory} of {Quantum} {Information}}.
\bpublisher{Cambridge University Press},
\blocation{Cambridge}
(\byear{2018}).
\doiurl{10.1017/9781316848142}
\end{bbook}
\endbibitem

\bibitem[\protect\citeauthoryear{{Fekete}}{1923}]{Fekete1923}
\begin{barticle}
\bauthor{\bsnm{{Fekete}}, \binits{M.}}:
\batitle{{{\"U}ber die Verteilung der Wurzeln bei gewissen algebraischen
  Gleichungen mit ganzzahligen Koeffizienten}}.
\bjtitle{Math. Z.}
\bvolume{17},
\bfpage{228}--\blpage{249}
(\byear{1923})
\doiurl{10.1007/BF01504345}
\end{barticle}
\endbibitem

\end{thebibliography}
\end{document}


\maketitle
\section{Finite-space conventions and qualification}
\label{suppsec:conventions}

The calculations use the Hamiltonian and site-singlet cutoff of
the article, with $a=1$. The common magnetic constant is removed
and each forward plaquette and its adjoint enter once.
Open electric sums count $N+1$ vertical links and $N$ rail pairs,
without an extra final endpoint rail pair.
Periodic magnetic action includes the wrap pair.
Open paths run between canonical endpoints; periodic paths have
labelled positions. Equal dimensions do not identify the two bases.
Full-sector spectral ordering is the sorted union over any exact
invariant momentum blocks, with multiplicity retained.

For a state vector, the direct full-Hamiltonian residual is
\begin{equation}
\label{supp-eq:input-residual}
 r=\frac{\|H\psi-E\psi\|}{\max(1,|E|)\|\psi\|},
 \qquad E=\frac{\langle\psi,H\psi\rangle}{\langle\psi,\psi\rangle}.
\end{equation}
It is formed from $H\psi-E\psi$, including the periodic wrap
where applicable. A small residual alone does not establish
ground-state ordering or control representation truncation.

\section{Inputs and coherence controls}
\label{suppsec:channel-controls}

The control and resource examples use $B=4$, $q=1$, $N=8$,
$g^2=a=1$ and zero source.
Open columns $3,4,5$ become periodic columns $0,1,2$, with
zero-based indexing. The matched electric terms are
$(V_3,V_4,V_5,U_3,D_3,U_4,D_4)$ and
$(V_0,V_1,V_2,U_0,D_0,U_1,D_1)$, respectively;
the magnetic Hamiltonian terms $(M_3,M_4)$ map to $(M_0,M_1)$.
Operator checks cover all old-leg copies and leakage, not only
selected expectation values.

The input density matrices used in the calculations are represented as
$\rho=FF^\dagger$; mixture weights are included in the columns of $F$.
No block is separately normalized. G is the original zero-source
vector, normalized by its total norm and qualified as a full-sector
ground reference. Its direct relative residual is about
$8.26\times10^{-14}$.
The periodic ground reference is an independent full-sector
minimum; no target eigenvector is used in the controls.

Arms A, B and C were evaluated separately, with the definitions
in the article.
The original-input and own-input distances, and the preprocessing,
map and total energy changes, are defined in Sec.~\ArticleControlSection{} of the article.
Arithmetic uses complex128/float64.
Absolute tolerances are $10^{-10}$ for norm, reference-energy
agreement and energy accounts, $10^{-8}$ for the relative input
residual, $10^{-11}$ for operator identities, and $10^{-12}$ for
channel identities and positivity roundoff.
Block checks use entrywise $10^{-12}$, trace norm
$10^{-12}+10^{-12}p_b$, aggregate half-trace norm
$10^{-11}+10^{-12}$, and weight/leakage $10^{-12}$.
Negative numerical eigenvalues are retained, not clipped.
These checks are diagnostics, not uncertainty intervals.

\section{Fixed resource selection}
\label{suppsec:resource-methods}

The complete bridge witnesses use the cut order and periodic
embedding specified in Sec.~\ArticleResourceSection{} of the article. Each $T_b$ includes
the full-Hamiltonian terms assigned to the bridge; retained,
interface and cut-changing terms are separated as in Eq.~(\ArticleFactorizationEquation)
and Appendix~\ArticleResourceAppendix{} of the article.

The canonical direction is the fixed bridge through $c_1$.
The best basis path minimizes the real diagonal of $T_b$;
exact ties use the lowest coordinate.
For a coherent resource, $T_b$ was checked for Hermiticity and
symmetrized before selecting the lowest numerical cluster of width
$10^{-12}+10^{-12}\|T_b\|_\infty$. The first basis vector whose
projection has norm above $10^{-10}$ was projected onto this
cluster and normalized, with a largest-magnitude component made
real and nonnegative. These fixed rays define the plotted comparison;
every branch is fixed before the six inputs are evaluated.

The numerical one-dimensional clusters do not certify exact
nondegeneracy or interval-optimal energies.
Gains subtract designed output from canonical output;
map costs subtract the unchanged open input, and
$G_{\rm var}$ subtracts the independent periodic ground reference.

\section{Matched spectra}
\label{supp:matched-spectra}
\label{supp:evidence-access}

The twelve unrounded full-sector ground energies correspond to
$N=4,6,8$, sector $q=0,1$ and boundary, at $B=4$ and $g^2=a=1$.
Compute periodic minus open energy at each size, then subtract
the vacuum shift from the flux shift and divide by $Na$.
The differences use these values before plot rounding;
decimal processing adds no physical precision.
The $N=4$ point remains a valid spectral comparison below the
sufficient map threshold; none of the twelve energies is a
reconstructed-state expectation.